\definecolor{linkblue}{rgb}{0.098,0.098,0.4392}
\newcommand{\arxiv}[2]{#1}
\DeclarePairedDelimiter{\abs}{\lvert}{\rvert}
\DeclarePairedDelimiter{\ceils}{\lceil}{\rceil}
\DeclarePairedDelimiter{\angles}{\langle}{\rangle}
\DeclarePairedDelimiter{\set}{\{}{\}}
\DeclarePairedDelimiter{\parens}{(}{)}
\newcommand{\N}{\mathbb{N}}
\newcommand{\R}{\mathbb{R}}
\newcommand{\NP}{\textsf{NP}}
\newcommand{\XP}{\textsf{XP}}
\newcommand{\FPT}{\textsf{FPT}}
\newcommand{\optturns}{\ensuremath{\tau}}
\renewcommand{\orcidID}[1]{\href{https://orcid.org/#1}{\hspace{1pt}\raisebox{-.1\height}{\includegraphics[scale=.03]{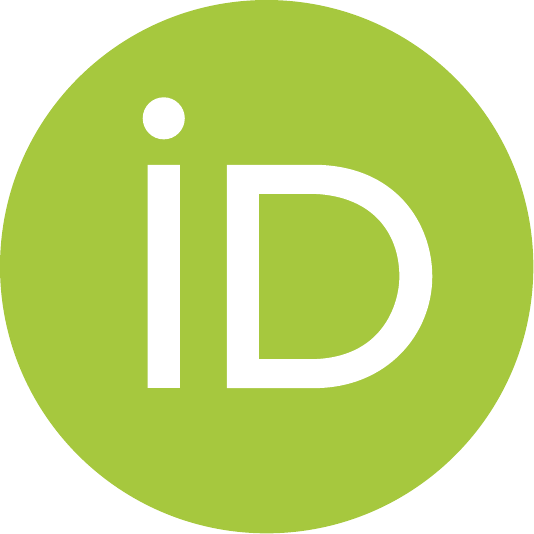}}}}
\newif\ifinappendix% Default is \inappendixfalse
\let\oldappendix\appendix% Store \appendix
\renewcommand{\appendix}{% Update \appendix
  \oldappendix% Default \appendix
  \inappendixtrue% Set switch to true
}
\newcommand{\restateref}[1]{\ifinappendix{\hyperref[#1]{$\star$}}\else{\hyperref[#1*]{$\star$}}\fi}
\newcommand*{\bigcdot}{}% Check if undefined
\DeclareRobustCommand*{\bigcdot}{%
  \mathbin{\mathpalette\bigcdot@{}}%
}
\newcommand*{\bigcdot@scalefactor}{.65}
\newcommand*{\bigcdot@widthfactor}{1.15}
\newcommand*{\bigcdot@}[2]{%
  % #1: math style
  % #2: unused
  \sbox0{$#1\vcenter{}$}% math axis
  \sbox2{$#1\cdot\m@th$}%
  \hbox to \bigcdot@widthfactor\wd2{%
    \hfil
    \raise\ht0\hbox{%
      \scalebox{\bigcdot@scalefactor}{%
        \lower\ht0\hbox{$#1\bullet\m@th$}%
      }%
    }%
    \hfil
  }%
}
\DeclareRobustCommand{\abbrevcrefs}{%
\crefname{theorem}{Thm.}{Thms.}%
\crefname{corollary}{Cor.}{Cors.}%
}
\DeclareRobustCommand{\cshref}[1]{{\abbrevcrefs\cref{#1}}}
\begin{document}
\title{The Complexity of Counting Turns in the Line-Based Dial-a-Ride Problem}
\author{
    Antonio Lauerbach\orcidID{0009-0007-9093-3443} \and
    Kendra Reiter\orcidID{0009-0004-7281-6516} \and
    Marie Schmidt\orcidID{0000-0001-9563-9955}
}
\authorrunning{A. Lauerbach et al.}
\institute{
    Department of Computer Science, University of Würzburg
    \email{antonio.lauerbach@stud-mail.uni-wuerzburg.de}\\
    \email{\{kendra.reiter,marie.schmidt\}@uni-wuerzburg.de}
}
\maketitle
\begin{abstract}
Dial-a-Ride problems have been proposed to model the challenge to consolidate passenger transportation requests with a fleet of shared vehicles.
The \emph{\textsc{line-based\ Dial-a-Ride\ Problem} (\textsc{liDARP})} is a variant where the passengers are transported along a fixed sequence of stops, with the option of taking shortcuts.
In this paper we consider the \textsc{liDARP} with the objective function to maximize the number of transported requests. We investigate the complexity of two optimization problems: the \textsc{liDARP}, and the problem to determine the minimum number of turns needed in an optimal \textsc{liDARP} solution, called the \emph{\textsc{MinTurn} problem}.
Based on a number of instance parameters and characteristics, we are able to state the boundary between polynomially solvable and \NP-hard instances for both problems.
Furthermore, we provide parameterized algorithms that are able to solve both the \textsc{liDARP} and \textsc{MinTurn} problem.

\keywords{Dial-a-Ride Problem \and liDARP \and NP-hardness \and Parameterized Complexity.}
\end{abstract}

\section{Introduction}
Ridepooling, i.e., to flexibly serve passenger transportation requests with a fleet of shared vehicles, has been recognized as a promising option to replace conventional public transport, in particular in regions with low demand density. 
In this paper, we study the complexity of the \emph{\textsc{line-based\ Dial-a-Ride\ Problem}} (\textsc{liDARP}), which combines the spatial aspects of a fixed sequence of stops (utilizing existing infrastructure) with the temporal flexibility of ridepooling. The goal is to reduce mobility-related emissions by efficiently pooling passengers with an improved service quality due to the fixed spatial structure. Here, we consider the objective to maximize the number of transported passengers.

Secondly, we study the complexity of determining the minimum number of turns per vehicle in an optimal \textsc{liDARP} solution, which we refer to as the \textsc{MinTurn} problem. This problem may be relevant in practice, e.g., when turns of autonomous vehicles need to be supervised by a (remote) operator. It can also be relevant when formulating the \textsc{liDARP} as a mixed-integer linear program, compare~\cite{reiter_line-based_2024}.

The remainder of the paper is organized as follows. 
In \cref{sec:problem}, we formally define the \textsc{liDARP} and the \textsc{MinTurn} problems.
We summarize known complexity results and further related research findings from prior work in \cref{sec:literature}, before we give an overview of the contribution of this paper in \cref{sec:contribution}. 
In the following \cref{sec:pol,sec:hard}, we are able to characterize the boundary between polynomially solvable and \NP-hard cases of \textsc{liDARP} and \textsc{MinTurn} according to instance specifics. 
In \cref{sec:parameterized-algos} we provide parameterized algorithms for the \textsc{liDARP} and \textsc{MinTurn} problem.
We conclude by discussing open problems for further research in~\cref{sec:conclusion}. 
The (full) proofs of statements marked with a \arxiv{clickable ``$\star$''}{``$\star$''} are in the \arxiv{appendix}{full version of this paper~\cite{lauerbach_turns-lidarp_2024}}.

\section{Problem Definition}\label{sec:problem}
We start by defining the \textsc{line-based\ Dial-a-Ride\ Problem} (\textsc{liDARP}) based on the definitions by Reiter~et~al.~\cite{reiter_line-based_2024}: a \emph{line}, given by a sequence~$H$ of~$h$ \emph{stops}, is operated by~$k$ \emph{vehicles}, each with \emph{capacity}~$c$, that transport some of the~$n$ \emph{passenger requests}~$P$.

The \emph{(time) distance} between distinct stops~$i,j\in H$ is given by~$t_{i,j}\in\N$. 
The vehicles may take \emph{shortcuts} by skipping stops, \emph{wait} at a stop, or \emph{turn} (i.e., change direction with respect to the sequence of stops prescribed by the line) at a stop. To turn, a vehicle needs~$t_\textrm{turn}\in\N_0$ time (the \emph{turn time}).

Each passenger submits a \emph{request}~$p\in P$ for transportation from an \emph{origin}~$o_p\in H$ to a \emph{destination}~$d_p\in H$ with~$o_p\not=d_p$. If~$o_p$ precedes~$d_p$ in the sequence of stops, we say that the request~$p$ is \emph{ascending}, otherwise it is \emph{descending}. In contrast to~\cite{reiter_line-based_2024}, we assume that each passenger submits an individual requests, i.e., we do not allow group requests and thus the \emph{passenger load} of each requests is~$1$.
A request may specify a \emph{time window}~$[e_p,l_p]$, delimited by an \emph{earliest pick-up time}~$e_p$ and a \emph{latest drop-off time}~$l_p$, during which it can be served.
We therefore can write a request~$p$ as~$([o_p,d_p],[e_p,l_p])$.
Picking up or dropping off a passenger requires a \emph{service time} of~$t_{\textrm{s}}\in\N_0$ per passenger.
Further, a passenger may not leave the vehicle until arriving at their destination. 
We make the \emph{service promise} that the ride time of a passenger~$p$ may not exceed~$\alpha\geq 1$ times the \emph{direct time distance}~$t_{o_p,d_p}$, that is, the \emph{maximum ride time} of a passenger is~$\alpha\cdot t_{o_p,d_p}$.
The \emph{ride time} is measured from the end of pick-up to the beginning of drop-off.

In the \textsc{liDARP}, we further guarantee that if we pick up a passenger, the passenger is at all times transported towards their destination (regarding the sequence of stops). Consequently, a turn is only allowed for vehicles without passengers on board.
This so-called \emph{directionality} property~\cite{reiter_line-based_2024} constitutes the main difference between the \textsc{liDARP} and the `regular' \textsc{DARP}.

A \emph{tour} $r$ consists of a sequence of \emph{timestamped waypoints}, each waypoint being a pick-up/drop-off of a request with the timestamp corresponding to the start of the pick-up/drop-off.
In order for a tour to be \emph{feasible}, the timestamps need to adhere to the time constraints imposed by the time distances, as well as the service and turn times, the service promises, and the time windows which delimit the start of pick-ups and drop-offs.
Furthermore, at no point in time may there be more than~$c$ passengers in each vehicle and each request may only be served once. 
If we remove the timestamps from the waypoints, we obtain the \emph{route} underlying a tour. 
A route is \emph{feasible} if it can be complemented to a feasible tour by adding timestamps.

A given tour~$r$ can be decomposed into segments, called \emph{subtours}, with the vehicle turning precisely at the end of each subtours. 
Note that each subtour is, on its own, a tour, thus inheriting the feasibility definition.
We analogously define \emph{subroutes} for routes.
Similar to requests, we say that a subtour/subroute is~\emph{ascending} if the vehicle drives along the sequence of stops, and \emph{descending} if it drives the sequence in reverse. Thus, due to the directionality property, ascending requests must be served in ascending subtours and descending requests in descending subtours. We denote by~$\abs{r}$ the number of turns of a tour~$r$. Note that this corresponds to the number of subtours of~$r$ as a vehicle always turns at the end of a subtour.

A collection of tours~$R$ is \emph{feasible}, if each tour~$r\in R$ is feasible and each request is served at most once. 
We analogously define \emph{feasible collections of routes}.

A \emph{solution} to the \textsc{liDARP} is a feasible collection of up to~$k$ tours. The \textsc{liDARP} (as we consider it here) consists of determining a solution that maximizes the number of served requests.

Given a \textsc{liDARP} instance, the \textsc{MinTurn} problem determines the minimum over the 
largest number of turns a vehicle has to take in an optimal \textsc{liDARP} solution for this instance. 
\textsc{MinTurn} thus determines $\optturns:=\min_{R\in {\mathcal{R}^*}}\max_{r\in {R}}\abs{r}$ where~$\mathcal{R}^*$ is the set of optimal solutions for the given \textsc{liDARP} instance.

\paragraph{Conventions}
We assume that time starts at~$0$ and is integer.
We consider all cases as special cases of the \textsc{liDARP}: to omit the service promise, we set $\alpha=\infty$, the service and turn times can be omitted by setting~$t_{\textrm{s}}=0$ and $t_\textrm{turn}=0$, and the time windows can be disabled by setting~$e_p=0$ and~$l_p=\infty$ for all~$p\in P$.
In the case without time windows, a request~$p\in P$ is thus specified only by its origin and destination, i.e.,~$p = ([o_p, d_p])$.

We say that two requests \emph{overlap} if they are in the same direction and the intervals between their respective origin and destination are not interior disjoint.

We assume~$k$ and~$c$ to be bounded by the number of requests~$n$.
Given a positive integer~$n$, we use~$[n]$ as shorthand for~$\set{1, 2,\dots,n}$.

\section{Related Work} \label{sec:literature}
The Dial-a-Ride Problem has been extensively studied in the literature, with a focus on modelling approaches using mixed-integer linear programs and solution strategies including both exact strategies and heuristics. The surveys by Ho et~al.~\cite{HO2018395} and Vansteenwegen et~al.~\cite{vansteenwegen_survey_2022} provide a comprehensive overview.

In \cite{reiter_line-based_2024}, Reiter et~al.\ introduce the \emph{Line-Based Dial-a-Ride Problem (liDARP)}, where they aim to find a solution which maximizes a weighted sum of transported requests and saved distance (i.e., the difference between the sum of direct distances of transported passengers and the total distance driven by vehicles). The version of the \textsc{liDARP} studied here is a special case of that problem, as we consider only one of the objectives.
Reiter et~al.~\cite{reiter_line-based_2024} propose and compare three different mixed-integer linear formulations, including the \emph{subline-based} formulation which explicitly models sequences of turns for each vehicle.

The complexity of DARP on a line with \emph{makespan objective}, minimizing the \emph{completion time} (the time to serve all requests), has been addressed by a number of publications in the literature.
We summarize their findings in \cref{tab:lit_review:overview}, where $o = d$ denotes the setting where all requests' origins are equal to their destinations (equivalent to the \textsc{Travelling Salesperson Problem}).
Furthermore, some publications~(\cite{garey_computers_1979,tsitsiklis_special_1992}) consider \emph{individual} service times $t_s$ per request.

All publications listed in \cref{tab:lit_review:overview} fix the vehicles' starting positions, consider a \emph{closed} setting, where the vehicles have to return to their starting position at the end of the day, and require all~$n$ requests to be served.
\begin{table}[htb!]
    \centering
    \caption{Overview of known results for DARP on a line with makespan objective.}
    \begin{tabular}{c@{\hskip 8pt}c@{\hskip 8pt}c@{\hskip 8pt}cc@{\hskip 4pt}c@{\hskip 4pt}c}
        \toprule
        \#Veh. & Cap. & $o = d$ & Time Windows & Complexity & Ref. & Comment \\
        \midrule
        1 & 1 & $-$ & $-$ & polynomial & \cite{DBLP:journals/informs/PaepeLSSS04} & \\
        1 & 1 & $-$ & $e_p$ & \NP-complete & \cite{DBLP:journals/talg/BjeldeHDHLMSSS21} & Thm 7.6 \\
        1 & 2 & $-$ & $-$ & \NP-complete & \cite{guan_routing_1998} & \\
        1 & $\geq 2$ & $-$ & $-$ & \NP-complete & \cite{DBLP:journals/talg/BjeldeHDHLMSSS21} & Thm 7.8 \\
        1 & $\infty$ & $-$ & $-$ & polynomial & \cite{DBLP:journals/informs/PaepeLSSS04} & \\
        1 & $\geq 1$ & $\checkmark$ & $e_p$ & \NP-complete & \cite{tsitsiklis_special_1992} & individual $t_{\textrm{s}}$ \\
        1 & $\geq 1$ & $\checkmark$ & $[e_p, l_p]$ & \NP-complete & \cite{tsitsiklis_special_1992} &  \\
        2 & 1 & $\checkmark$ & $[e_p, l_p]$ & \NP-complete & \cite{garey_computers_1979} & individual $t_{\textrm{s}}$ \\
        $\geq 1$ & $c$ & $\checkmark$ & $l_p$ & polynomial & \cite{DBLP:journals/informs/PaepeLSSS04} & \\
        \bottomrule
    \end{tabular}
    \label{tab:lit_review:overview}
\end{table}

We note that de~Paepe \cite{DBLP:journals/informs/PaepeLSSS04} further showed that the setting with an arbitrary number of vehicles of fixed capacity~$c$, without time windows and where ${o = d}$, is also polynomially solvable under the objective of minimizing the sum of (weighted) completion times.

Further research has been conducted into examining the complexity of the Dial-a-Ride problem with individual loads per requests, minimizing the sum of driven distances on the half-line and line, as well as on the star, tree, circle, and $\R^d$ with the Euclidean metric \cite{archetti_complexity_2011,DBLP:journals/informs/PaepeLSSS04}.

Lastly, approximation algorithms for the related Vehicle Scheduling Problem on a line (L-VSP) have also been proposed. Karuno et~al. \cite{karuno_better_2002} developed a $\frac{3}{2}$-approximation algorithm for the closed L-VSP with a single vehicle and time windows, under minimizing completion times, where the starting vertex is fixed. Allowing for arbitrary starting vertices, Gaur et~al.~\cite{GAUR200387} develop a $\frac{5}{3}$-approximation for the L-VSP. Under the objective of minimizing makespan, with an arbitrary starting position, Yu and Liu \cite{yu_liu_vsp} develop a $\frac{3}{2}$-approximation for the closed, and a $\frac{5}{3}$-approximation algorithm for the open variants.

Considering the Multi-Vehicle Scheduling Problem (MVSP) on a line, Karuno and Nagamochi \cite{karuno_nagamochi_MVSP} present a $2$-approximation algorithm to minimize the total completion time for a fixed number of vehicles.

\section{Our Contribution} \label{sec:contribution}
In this paper, we study the complexity of the \textsc{liDARP} and the novel \textsc{MinTurn} problem. 
Unlike the \textsc{DARP} on a line studied in the literature, we consider an \emph{open} setting, where the vehicles do not have to return to their (arbitrary) starting positions, and maximize the number of served passengers as an objective. 
Note that most of our results can be transferred to the closed setting.

We consider different instance parameters and characteristics: the number of vehicles~$k$ and their capacity~$c$, as well as the presence of time windows, shortcuts, turn times, the service promise, and the service time.
Our complexity results for the \textsc{MinTurn} problem are summarized in~\cref{tab:contribution:min_turn} and novel results for the \textsc{liDARP} are given in~\cref{tab:contribution:lidarp}.
Note that whether there is a turn time or not appears to be irrelevant for the complexity of the problems: we have no turn time in all hardness results, while all algorithmic results hold for arbitrary turn times. 
We further show that it is strongly \NP-hard to approximate \textsc{MinTurn} with a factor better than~$3$ (\cref{cor:mt-approx-nphard}).

\begin{table}[htb!]
    \centering
    \begin{threeparttable}
    \caption{Overview of novel results for the \textsc{MinTurn} problem presented here.}
    \begin{tabular}{c@{\hskip 5pt}c@{\hskip 5pt}c@{\hskip 5pt}c@{\hskip 5pt}c@{\hskip 5pt}c@{\hskip 5pt}c@{\hskip 5pt}c}
        \toprule
        \multirow{2}*{\#Veh.}  & \multirow{2}*{Cap.} & Time & \multirow{2}*{Shortcuts} & Service & Service & \multirow{2}*{Complexity} & \multirow{2}*{Ref.}  \\
         & & Windows & & Promise & Time & & \\
        \midrule
        $\geq 1$ & $\geq 1$  & $-$ & $-$ & \checkmark & $-$ & polynomial & \cshref{thm:mt-poly-variants} \\
        $\geq 1$ & $\geq 1$  & $-$ & \checkmark & $-$ & \checkmark & polynomial & \cshref{thm:mt-poly-variants} \\
        $\geq 1$ & $1$  & $-$ & \checkmark & \checkmark & \checkmark & polynomial & \cshref{thm:mt-poly-variants} \\
        $\geq 1$ & $\geq 2$ &  $-$ & $-$ & \checkmark & \checkmark & strongly \NP-hard & \cshref{thm:sp-st-nphard} \\
        $\geq 1$ & $\geq 2$ &  $-$ & \checkmark & \checkmark & $-$ & strongly \NP-hard & \cshref{thm:sp-sc-nphard} \\
        $\geq 1$ & $\geq 1$ &  \checkmark & $-$ & $-$ & $-$ & strongly \NP-hard & \cshref{thm:mt-tw-nphard} \\
        \bottomrule
    \end{tabular}
    \label{tab:contribution:min_turn}
    \end{threeparttable}
\end{table}

\begin{table}[htb!]
    \centering
    \caption{Overview of novel results for the \textsc{liDARP} problem presented here.}
    \begin{tabular}{c@{\hskip 5pt}c@{\hskip 5pt}c@{\hskip 5pt}c@{\hskip 5pt}c@{\hskip 5pt}c@{\hskip 5pt}c@{\hskip 5pt}c}
        \toprule
        \multirow{2}*{\#Veh.}  & \multirow{2}*{Cap.} & Time & \multirow{2}*{Shortcuts} & Service & Service & \multirow{2}*{Complexity} & \multirow{2}*{Ref.}  \\
         & & Windows & & Promise & Time & & \\
        \midrule
        $\geq 1$ & $\geq 1$  & $-$ & \checkmark & \checkmark & \checkmark & polynomial & \cshref{thm:poly-lidarp} \\
        $\geq 1$ & $\geq 1$ & \checkmark & $-$  & $-$ & $-$ & strongly \NP-hard & \cshref{thm:mt-tw-nphard}
        \\
        \bottomrule
    \end{tabular}
    \label{tab:contribution:lidarp}
\end{table}

\newpage
\section{Polynomially Solvable Cases}\label{sec:pol}
In this section, we characterize the cases in which the \textsc{liDARP} and \textsc{MinTurn} problem are polynomially solvable.
We begin by showing that it suffices to consider feasible routes, as we can efficiently transform them into feasible tours.

\begin{restatable}[\restateref{lem:route-to-tour}]{lemma}{RouteToTour}
    \label{lem:route-to-tour}
    Given a route, we can check in polynomial time whether it is feasible and, if so, complement it to a feasible tour.
    If there are no time windows, this can even be done in linear time. 
    If additionally, there is no service promise, the route is feasible as long as it respects capacities.
\end{restatable}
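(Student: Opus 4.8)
The plan is to separate the timestamp-independent conditions from the timing conditions. Capacity, the directionality property, and the requirement that each request is served at most once are all properties of the sequence of waypoints alone, so I would first verify them in a single sweep over the route, maintaining the current passenger load and the current travel direction. If any of these is violated the route is infeasible; otherwise it remains to decide whether consistent timestamps exist.

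For the timing, I would introduce one variable $x_v$ per waypoint $v$, representing the start time of the pick-up/drop-off at $v$, and translate the definition of a feasible tour into constraints. Consecutive waypoints $u,v$ give $x_v \ge x_u + \delta(u,v)$, where $\delta(u,v)$ is the service time $t_{\textrm{s}}$ at $u$ plus the travel time along the route from $u$ to $v$ (including $t_\textrm{turn}$ whenever the traversed segment contains a turn); since waiting is always permitted these are lower bounds. The time windows contribute $x_{p^+}\ge e_p$ and $x_{p^-}\le l_p$, where $p^+$ and $p^-$ denote the pick-up and drop-off waypoints of request $p$, and the service promise contributes $x_{p^-}-x_{p^+}\le \alpha\, t_{o_p,d_p}+t_{\textrm{s}}$ (the ride time being measured from the end of the pick-up). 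Every constraint is of difference form, so this is a difference-constraint system on $O(n)$ variables with $O(n)$ constraints. I would decide feasibility by testing the associated constraint graph for a negative cycle, e.g.\ via Bellman--Ford from a super-source fixed at time $0$; this runs in polynomial time, and when no negative cycle exists the resulting shortest-path distances furnish explicit timestamps, i.e.\ complement the route to a feasible tour.

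When there are no time windows, the only remaining timing constraints are the lower bounds $x_v \ge x_u + \delta(u,v)$ together with the service promises. Here I would schedule greedily, setting each $x_v$ to its forced lower bound in a single left-to-right pass (the earliest-arrival tour). Because no waiting is ever forced, this simultaneously minimizes every ride time $x_{p^-}-x_{p^+}$, each of which then equals the fixed physical transit time of $p$ along the route; a second linear pass checks the service promise for every request, and if one fails no schedule can satisfy it. If additionally there is no service promise, these constraints vanish entirely, the earliest-arrival schedule is always valid, and feasibility reduces to the structural check of the first step --- in particular, the route is feasible precisely when the load never exceeds the capacity $c$.

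I expect the main obstacle to be the interaction between time windows and the service promise. A purely greedy earliest-arrival schedule is \emph{not} correct in general: waiting at a stop to meet a later passenger's earliest pick-up time $e_q$ can inflate the ride time of a passenger already on board and spuriously violate that passenger's service promise, even though delaying the whole ride slightly would have satisfied both bounds. Recognizing that all constraints are difference constraints is exactly what sidesteps this tension, since negative-cycle detection accounts for the lower and upper bounds simultaneously. The remaining care is to confirm that in the no-time-window regime this tension genuinely disappears, which is what licenses the faster linear-time greedy argument.
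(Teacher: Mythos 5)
Your proof is correct, and it agrees with the paper on the parts the paper proves in detail while replacing the paper's one black-box step with a self-contained argument. The linear-time half is essentially identical: both you and the paper set every timestamp to its forced lower bound, observe that without time windows no waiting is ever forced so this earliest-arrival tour minimizes every ride time simultaneously, and conclude that it violates the service promise only if every schedule does; the capacity-only characterization in the absence of a service promise then falls out the same way. Where you genuinely differ is the general polynomial-time case with time windows: the paper disposes of it by citing the feasibility-testing result of~\cite{haugland_feasibility_2010}, whereas you encode the minimum-separation, time-window, and service-promise conditions as a system of difference constraints on the waypoint times ($x_v - x_u \ge \delta(u,v)$, $x_{p^+}\ge e_p$, $x_{p^-}\le l_p$, $x_{p^-}-x_{p^+}\le \alpha\, t_{o_p,d_p}+t_{\textrm{s}}$) and decide feasibility by negative-cycle detection, with the shortest-path potentials furnishing explicit timestamps. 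This buys transparency: it makes visible, as your closing remark correctly isolates, exactly why greedy earliest-arrival scheduling fails once time windows are present (a forced wait for a later passenger's $e_q$ inflates the ride time of a passenger already on board, which delaying the earlier pick-up would have cured), a tension the paper's citation hides; the cited routine is the more practical specialized algorithm, but your reduction fully suffices for the lemma. One minor point to add in a final write-up: since the paper's convention is that time is integer, note that all constraint constants can be taken integral (replace $\alpha\cdot t_{o_p,d_p}+t_{\textrm{s}}$ by its floor), so the Bellman--Ford solution is integral as well.
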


Without time windows it even suffices to find feasible subroutes, as any route obtained by \emph{joining} feasible routes, i.e., concatenating the sequences of waypoints of the routes, is feasible. 

\begin{restatable}[\restateref{lem:route-joining}]{lemma}{RouteJoining}
    \label{lem:route-joining}
    Consider a \textsc{liDARP} instance without time windows and a feasible collection~$R$ of routes. Joining all routes in~$R$ (in arbitrary order) yields a feasible route.
\end{restatable}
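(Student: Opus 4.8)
The plan is to verify directly that the joined route can be complemented to a feasible tour, checking each feasibility condition in turn. Since there are no time windows, the conditions that remain to be respected are: each request is served at most once, the capacity~$c$ is never exceeded, directionality holds (turns only with an empty vehicle), and the service promise~$\alpha\cdot t_{o_p,d_p}$ is met for every served request~$p$. The first condition is immediate: because~$R$ is a feasible collection, every request appears in at most one route and at most once there, and joining merely concatenates waypoint sequences without duplication.

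The key structural observation I would isolate is that the vehicle is empty at the junction between any two joined routes. Indeed, in a feasible route every served request is both picked up and dropped off, and a passenger never leaves before reaching their destination; hence at the end of each route all boarded passengers have disembarked, and the next route starts with an empty vehicle. This settles capacity (within each route segment the load profile is exactly the one of that route, since all passengers of earlier segments have already alighted, and it drops to~$0$ at every junction, so it never exceeds~$c$) and directionality (inside a segment it is inherited from the original route; across a junction the vehicle is empty, so the travel and any turn needed to move from the final stop of one route to the initial stop of the next is permitted).

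It then remains to produce timestamps meeting the service promise. Here I would take, for each route~$r_i$ in the join, a feasible tour extending it (which exists since~$r_i$ is feasible, and can be obtained efficiently via \cref{lem:route-to-tour}), and stitch these tours together: keep the timestamps of the first route, and time each subsequent route exactly as in its own feasible tour but shifted by a constant so that it begins after the previous route ends plus the (empty-vehicle) travel time~$t_{i,j}$ and turn time~$t_\textrm{turn}$ required to reach its first stop. A constant shift leaves all consecutive-waypoint gaps inside a route unchanged, the inserted junction gap is set to exactly the required travel and turn time, and---crucially---each passenger's pick-up and drop-off lie in the same route, so the difference of their timestamps (the ride time) is identical to that in the route's own feasible tour and therefore still at most~$\alpha\cdot t_{o_p,d_p}$. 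Because no passenger's ride spans a junction, the stitching cannot break any service promise, and the resulting sequence is a feasible tour for the joined route. As the construction never refers to the order of the routes, it applies to an arbitrary join order.

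I expect the main obstacle to be the service promise, since it is the only constraint that couples the timing of distinct waypoints and could a priori be violated by concatenation; the observation that every junction is traversed by an empty vehicle---so that each passenger's ride is confined to a single route segment whose timing can be reused verbatim up to a harmless global shift---is precisely what makes the argument go through. A minor point I would state carefully is that a feasible route indeed ends with an empty vehicle, which follows from the requirement that served requests are both boarded and dropped off and that passengers cannot alight before their destination.
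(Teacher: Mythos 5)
Your proposal is correct and follows essentially the same approach as the paper's proof: stitch together feasible tours for the individual routes, preserving all time differences within each tour and inserting only the required travel, service, and turn time at each junction, so that ride times---and hence the service promise---are unaffected. Your explicit observation that the vehicle is empty at every junction (which the paper leaves implicit when dismissing capacity and directionality) is a welcome clarification, not a deviation.
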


We now use these lemmas, to show that the \textsc{liDARP} is 
polynomially solvable in the absence of time windows, by constructing a feasible tour that serves all requests consecutively.

\begin{restatable}[\restateref{thm:poly-lidarp}]{theorem}{PolyLiDARP}
    \label{thm:poly-lidarp}
    If there are no time windows, all requests can be served. A solution for the \textsc{liDARP} serving all requests can be computed in linear time.
\end{restatable}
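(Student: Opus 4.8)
The plan is to exhibit, for any time-window-free instance, a single feasible tour that serves every request by handling the requests one after another, and then to read off feasibility and the running time from the two preceding lemmas. Concretely, for each request $p\in P$ I would form the elementary route $r_p$ that picks up $p$ at $o_p$ and then drops it off at $d_p$, travelling directly between the two stops via a shortcut. Such a route is a single subroute in the direction of $p$, so it respects directionality automatically; it carries at most one passenger, so it respects the capacity $c\geq 1$; and, crucially, since the vehicle drives directly from $o_p$ to $d_p$, the ride time equals the direct distance $t_{o_p,d_p}$, which satisfies the service promise because $\alpha\cdot t_{o_p,d_p}\geq t_{o_p,d_p}$ for $\alpha\geq 1$. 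As there are no time windows, nothing else can be violated, so each $r_p$ is feasible and the collection $R=\set{r_p : p\in P}$ is a feasible collection of routes serving each request exactly once.

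Next I would apply \cref{lem:route-joining} to $R$: since the instance has no time windows, joining all routes of $R$ in any order yields a single feasible route $r$ that serves all $n$ requests. I would then apply \cref{lem:route-to-tour} to $r$, which, again using the absence of time windows, complements $r$ to a feasible tour in linear time. Assigning this one tour to a single vehicle produces a feasible collection of at most $k$ tours, i.e., a valid \textsc{liDARP} solution, and it serves all requests. Since no solution can serve more than the $n$ requests present, this solution is optimal, which establishes that all requests can be served.

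For the running time, constructing the $n$ elementary routes and concatenating them takes $O(n)$ time, and the complementation step is linear by \cref{lem:route-to-tour}, so the whole solution is produced in linear time. The only point that requires genuine care, and the closest thing to an obstacle, is verifying that serving requests individually never violates the service promise; but this follows immediately from the observation that driving each passenger straight to their destination realizes the minimum possible ride time $t_{o_p,d_p}$. Beyond that observation the argument is a routine application of the two lemmas.
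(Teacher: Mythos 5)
Your proof is correct and takes essentially the same route as the paper's: feasible single-request routes for every $p\in P$, joined into one route via \cref{lem:route-joining}, then complemented to a tour in linear time via \cref{lem:route-to-tour}. The only differences are cosmetic — you spell out why a single-request route is feasible (capacity, directionality, service promise), which the paper treats as immediate, and your mention of a shortcut is unnecessary since driving directly along the line already realizes the direct time distance $t_{o_p,d_p}$.
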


As we see later in~\cref{thm:mt-tw-nphard}, the \textsc{liDARP} is \NP-hard as soon as we have time windows. 
We therefore now focus on the \textsc{MinTurn} problem. 
We begin by showing that if we have already determined the (number of) subroutes needed to serve all requests, we can efficiently compute~$\optturns$ for the \textsc{MinTurn} problem.

\begin{lemma}
    \label{lem:mt-subroute-combination}
    Consider an instance of the \textsc{MinTurn} problem without time windows. Let~$a$ ($b$) be the smallest number of feasible ascending (descending) subroutes needed to serve all ascending (descending) requests. Assume w.l.o.g. that $a\geq b$. 
    Then, $\optturns = \max\set*{\ceils*{\frac{a+b}{k}},2\cdot\ceils*{\frac{a}{k}}-1}$.
\end{lemma}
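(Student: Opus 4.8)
The plan is to reduce the statement to a purely combinatorial assignment problem and then bound its optimum from both sides. First I would record two structural facts. Since the number of turns $\abs{r}$ of a tour equals its number of subtours, and consecutive subtours of a single tour necessarily alternate between ascending and descending direction, a vehicle serving $p_i$ ascending and $q_i$ descending subroutes needs exactly $\max\set*{p_i+q_i,\,2\max\set*{p_i,q_i}-1}$ subtours: if $\abs{p_i-q_i}\le 1$ the two directions interleave perfectly and the count is $p_i+q_i$, while otherwise empty subtours of the minority direction must be inserted purely to turn, yielding $2\max\set*{p_i,q_i}-1$. By \cref{lem:route-joining,lem:route-to-tour} (no time windows) such a tour, empty turning subtours included, is always feasible. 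Since an optimal \textsc{liDARP} solution serves every request (\cref{thm:poly-lidarp}), and $a$ (resp.\ $b$) is the minimum number of ascending (resp.\ descending) subroutes serving all ascending (resp.\ descending) requests, using more than these minima can only raise the counts; so it suffices to choose $p_i,q_i\ge 0$ with $\sum_i p_i=a$ and $\sum_i q_i=b$ minimising $\max_i \max\set*{p_i+q_i,\,2\max\set*{p_i,q_i}-1}$.

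For the lower bound I would use two pigeonhole arguments on an arbitrary optimal solution. Summing the per-vehicle counts gives at least $\sum_i(p_i+q_i)\ge a+b$ subtours in total, so some vehicle has at least $\ceils*{\frac{a+b}{k}}$ of them. Moreover, since the solution uses at least $a$ ascending subtours in total, some vehicle serves at least $\ceils*{\frac{a}{k}}$ ascending subroutes, and that vehicle already needs $2\ceils*{\frac{a}{k}}-1$ subtours by the alternation bound. Hence $\optturns\ge\max\set*{\ceils*{\frac{a+b}{k}},\,2\ceils*{\frac{a}{k}}-1}$.

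The upper bound is the crux: I would give an explicit balanced assignment meeting this value. Write $a=k\alpha+s$ and $b=k\beta+t$ with $0\le s,t<k$, so that spreading each direction as evenly as possible gives every vehicle $\alpha$ or $\alpha+1$ ascending subroutes ($s$ vehicles getting $\alpha+1$) and $\beta$ or $\beta+1$ descending subroutes ($t$ vehicles getting $\beta+1$); in particular $p_i\le\ceils*{\frac{a}{k}}$ and $q_i\le\ceils*{\frac{b}{k}}\le\ceils*{\frac{a}{k}}$, so the alternation term $2\max\set*{p_i,q_i}-1$ never exceeds $2\ceils*{\frac{a}{k}}-1$ (this is where $a\ge b$ enters). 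The key idea is to \emph{pair} the assignments so that a vehicle receiving a surplus ascending subroute receives, whenever possible, no surplus descending one. A short case distinction then bounds the largest total $p_i+q_i$: if $s+t\le k$ the surpluses can be kept disjoint and the maximum total is $\alpha+\beta+1$ (or $\alpha+\beta$ if $s=t=0$), whereas if $s+t>k$ an unavoidable doubly-surplus vehicle attains $\alpha+\beta+2$; in every case this equals $\ceils*{\frac{a+b}{k}}$. Thus every vehicle stays within $\max\set*{\ceils*{\frac{a+b}{k}},\,2\ceils*{\frac{a}{k}}-1}$, matching the lower bound.

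The main obstacle I anticipate is the bookkeeping in this last step: one must simultaneously control the \emph{total} term $p_i+q_i$ and the \emph{alternation} term $2\max\set*{p_i,q_i}-1$, and verify—across the divisibility cases governed by $s$ and $t$, including the degenerate subcases where $\alpha=\beta$ or $s=0$—that neither term ever exceeds the claimed maximum. Establishing the exact per-vehicle cost formula (in particular that empty turning subtours are both necessary and sufficient, and feasible via \cref{lem:route-joining}) is the other point requiring care.
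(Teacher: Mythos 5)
Your proposal is correct and takes essentially the same route as the paper's proof: the two pigeonhole lower bounds are identical, and your upper bound---spreading the $a$ ascending and $b$ descending subroutes as evenly as possible over the $k$ vehicles, inserting empty subroutes to respect alternation, with feasibility supplied by \cref{lem:route-joining}---is the same construction the paper carries out with its ``artificial'' subroutes. The only difference is bookkeeping: the paper splits on whether $\ceils*{\frac{a+b}{k}} = 2\ceils*{\frac{a}{k}}$ (then builds routes of that length) or $\ceils*{\frac{a+b}{k}} \leq 2\ceils*{\frac{a}{k}}-1$ (then pads every route to exactly $2\ceils*{\frac{a}{k}}-1$ subroutes), whereas you organize the identical balancing through remainders $s,t$ and disjoint surplus pairing, unified by your explicit per-vehicle cost formula $\max\set*{p_i+q_i,\,2\max\set*{p_i,q_i}-1}$, which the paper uses implicitly rather than stating.
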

\begin{proof}
    We observe that we can create a feasible collection of~$k$ routes serving all requests by alternatingly joining ascending and descending subroutes into routes of length~$2\ceils*{\frac{a}{k}}$, using each subroute once and adding \emph{artificial subroutes}, that do not serve requests, in case there not enough subroutes. The resulting routes are feasible according to~\cref{lem:route-joining}, as all subroutes, including the artificial subroutes, are feasible.

    Furthermore, we can prove two lower bounds on the number of turns per vehicle: 
    First, by the pigeonhole principle, there has to be a route consisting of at least~$\ceils*{\frac{a+b}{k}}$ subroutes.
    Second, as there must be a route containing at least~$\ceils{\frac{a}{k}}$ ascending subroutes, and ascending and descending subroutes alternate, this route contains at least~$2\ceils*{\frac{a}{k}}-1$ subroutes.
    The lower bound is thus~$\max\set*{\ceils*{\frac{a+b}{k}},2\ceils*{\frac{a}{k}}-1}$.

    If~$\ceils{\frac{a+b}{k}}=2\ceils{\frac{a}{k}}$, the upper and lower bound coincide. Otherwise, it holds that~$\ceils{\frac{a+b}{k}}\leq 2\cdot\ceils{\frac{a}{k}}-1$. 
    In this case, we add~$d$ artificial descending subroutes such that~$\frac{a+b+d}{k}=2\ceils*{\frac{a}{k}}-1$. 
    We then combine the subroutes into a feasible collection of~$k$ routes serving all requests, each with~$2\ceils*{\frac{a}{k}}-1$ turns.
    For this, we first add~$\ceils{\frac{a}{k}}-1$ ascending subroutes as well as~$\ceils{\frac{a}{k}}-1$ descending subroutes to each route. This leaves exactly~$k$ subroutes unassigned, as~$a+b+d=\parens*{2\ceils{\frac{a}{k}}-1}\cdot k=2k\parens{\ceils*{\frac{a}{k}}-1}+k$. We can thus assign each of these~$k$ subroutes to a separate route. Thus, each route is assigned~$2\ceils*{\frac{a}{k}}-1$ subroutes, with a route either containing~$\ceils*{\frac{a}{k}}$ ascending or~$\ceils*{\frac{a}{k}}$ descending subroutes. By alternating the ascending and descending subroutes in a route, we obtain routes with~$2\ceils*{\frac{a}{k}}-1$ turns. According to~\cref{lem:route-joining}, these routes form a feasible collection. 
    \qed
\end{proof}

To solve the \textsc{MinTurn} problem in the absence of time windows, we thus need to determine the minimum number of feasible ascending and descending subroutes needed to serve all requests. We now show that, if the feasibility of subroutes is determined by the capacity, this can be done in polynomial time.

\begin{lemma}\label{lem:coloring}
    Consider a \textsc{MinTurn} instance where subroutes are already feasible if they respect the capacity. Let~$\chi$ be the maximum number of pairwise overlapping ascending (descending) requests. The minimum number of feasible ascending (descending) subroutes needed to serve all ascending (descending) requests is~$\ceils{\chi/c}$. Determining~$\chi$ is possible in polynomial time.
\end{lemma}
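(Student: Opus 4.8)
The plan is to interpret each ascending request $p$ as the interval $[o_p,d_p]$ on the line and to observe that, since subroute feasibility is governed here solely by capacity, a set of ascending requests can be served by one feasible ascending subroute if and only if no segment of the line is traversed by more than $c$ of their intervals --- that is, the set has \emph{depth} at most $c$. By the Helly property of intervals on a line, a family of pairwise overlapping requests shares a common segment, so $\chi$ equals the maximum depth, i.e., the largest number of intervals covering any single segment. Thus determining the minimum number of feasible ascending subroutes amounts to partitioning the ascending intervals into as few classes as possible so that each class has depth at most $c$; the descending case is symmetric.

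For the lower bound I would fix a segment covered by $\chi$ mutually overlapping requests and note that any single subroute can contain at most $c$ of these, so at least $\ceils*{\chi/c}$ subroutes are required. For the matching upper bound I would use that interval graphs are perfect, so the interval graph of the ascending requests admits a proper coloring with exactly $\chi$ colors. I would then group these colors into $\ceils*{\chi/c}$ batches of at most $c$ colors each and assign all requests receiving a color from a given batch to one subroute. Since any segment is covered by at most $\chi$ requests, all of them carrying distinct colors under a proper coloring, at most $c$ of them fall into a fixed batch; hence each subroute has depth at most $c$ and is feasible, giving exactly $\ceils*{\chi/c}$ feasible subroutes and meeting the lower bound.

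The main obstacle is the upper bound, specifically verifying that batching color classes never forces a subroute's depth above $c$; this is precisely where perfection of interval graphs (chromatic number equals clique number) together with the Helly property do the work, turning a proper coloring into a valid capacity-respecting partition. Finally, to compute $\chi$ I would run a sweep over the sorted list of all origins and destinations, incrementing a counter at each pick-up and decrementing it at each drop-off while recording the maximum; this runs in $O(n\log n)$ time. Here I would take care with the \emph{not interior disjoint} convention so that requests meeting only at a shared endpoint are not counted as overlapping, which is consistent with the fact that such requests can share a subroute without ever being simultaneously on board.
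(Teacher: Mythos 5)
Your proposal is correct and takes essentially the same approach as the paper: both view the problem as properly coloring the interval (conflict) graph of the requests, invoke perfection of interval graphs so that the chromatic number equals the clique number $\chi$, and then pack $c$ color classes (seats) per subroute to get exactly $\lceil \chi/c \rceil$ feasible subroutes. Your extra steps---the depth characterization of capacity-feasibility, the Helly-property argument, the explicit lower bound, and the sweep-line computation of $\chi$---are elaborations of details the paper leaves implicit rather than a different argument.
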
 
\begin{proof}
    The assignment of ascending (descending) requests to seats in subroutes corresponds to coloring the conflict graph of the requests, as no two overlapping requests may occupy the same seat. As the conflict graph is an interval graph, the chromatic number~$\chi$ corresponds to the maximum number of pairwise overlapping requests~\cite{HOUGARDY20062529} and can be determined in polynomial time.
    Thus, if~$\chi$ seats are needed,~$\ceils*{\chi/c}$ subroutes are necessary to serve all requests.
    \qed
\end{proof}

These lemmas imply that a \textsc{MinTurn} instance without time windows is solvable in polynomial time if the feasibility of subroutes is determined solely by the capacity constraints. We use this insight to characterize the cases in which the \textsc{MinTurn} problem is polynomially solvable.

\begin{restatable}[\restateref{thm:mt-poly-variants}]{theorem}{MTPolyVariants}
    \label{thm:mt-poly-variants}
    Consider an instance of \textsc{MinTurn}. Let~$a$ ($b$) be the maximum number of pairwise overlapping ascending (descending) requests and assume w.l.o.g. $a\geq b$. In the following cases of the \textsc{MinTurn} problem, we have $\optturns = \max\set{\ceils*{\frac{a+b}{k}},2\cdot\ceils{\frac{a}{k}}-1}$, which can be determined in polynomial time:
    \begin{enumerate}
     \item without time windows and  without service promise   
     \item without time windows, without shortcuts, and without service times
    \item without time windows and with a capacity of~$1$
    \end{enumerate}
\end{restatable}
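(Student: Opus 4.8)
The plan is to reduce each of the three cases to the situation of \cref{lem:coloring}, in which the feasibility of a subroute is governed solely by the capacity constraint, and then to chain \cref{lem:coloring} with \cref{lem:mt-subroute-combination}. Concretely, once I know that every capacity-respecting subroute is feasible, \cref{lem:coloring} tells me that the minimum numbers of feasible ascending and descending subroutes needed to serve all requests are $\ceils*{a/c}$ and $\ceils*{b/c}$, where $a$ and $b$ are the maximum numbers of pairwise overlapping ascending and descending requests. Feeding these subroute counts into \cref{lem:mt-subroute-combination} then yields the claimed value of $\optturns$. Since $a$ and $b$ are chromatic numbers of interval graphs, they --- and hence the whole expression --- are computable in polynomial time, which settles the tractability claim. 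It therefore remains to verify, for each case separately, that capacity is indeed the only binding constraint on subroutes.

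\textbf{Case 1} is immediate: with neither time windows nor a service promise, the last sentence of \cref{lem:route-to-tour} states that a route (and in particular a subroute) is feasible precisely when it respects the capacity, so \cref{lem:coloring} applies verbatim. For \textbf{Case 2} I would argue that every passenger's ride time is forced to equal its direct time distance $t_{o_p,d_p}$: without shortcuts an ascending (descending) vehicle must traverse the line between $o_p$ and $d_p$ stop by stop; without service times these intermediate stops add nothing; and without time windows no waiting is ever required. Hence the ride time equals the traversal distance along the line, i.e.\ the direct distance, so the service promise $\alpha\cdot t_{o_p,d_p}$ with $\alpha\geq 1$ holds automatically, independently of which other requests share the subroute, and again only the capacity constrains a subroute. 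For \textbf{Case 3}, a capacity of $1$ means at most one passenger is on board at any time, so within a subroute the served requests form a non-overlapping chain; each such passenger travels alone, can be routed directly to its destination (shortcuts are available), and the service and turn times do not count towards its ride time, which is measured from the end of pick-up to the start of drop-off. Thus its ride time is exactly $t_{o_p,d_p}$, the service promise holds, and capacity is once more the sole obstruction.

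The main obstacle I anticipate is precisely this verification of capacity-determined feasibility in Cases 2 and 3, where the service promise \emph{is} present: I must make sure that no combination of shared requests, service times, or routing freedom can inflate any passenger's ride time beyond $t_{o_p,d_p}$. This is exactly the ingredient that breaks down --- and renders the problem \NP-hard --- as soon as the capacity is at least $2$ together with either shortcuts (\cref{thm:sp-sc-nphard}) or service times (\cref{thm:sp-st-nphard}), so the argument must genuinely exploit the specific restriction removed in each case rather than appeal to a uniform reason.

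A secondary point requiring care is the final bookkeeping: after \cref{lem:coloring} replaces the overlap numbers by the subroute counts $\ceils*{a/c}$ and $\ceils*{b/c}$, I must substitute these into the expression of \cref{lem:mt-subroute-combination} and simplify to recover the stated closed form for $\optturns$, checking in particular the capacity-$1$ specialization of Case 3, where $\ceils*{a/c}=a$ and $\ceils*{b/c}=b$ collapse the two nested ceilings.
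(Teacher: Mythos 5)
Your proposal is correct and takes essentially the same route as the paper's own proof: the identical three case-by-case reductions to capacity-determined feasibility (the last sentence of \cref{lem:route-to-tour} for Case~1, the no-delay argument for Case~2, the solo-passenger argument for Case~3), followed by chaining \cref{lem:coloring} into \cref{lem:mt-subroute-combination}. Your closing concern about the final substitution is in fact well-founded, but the discrepancy lies in the theorem statement rather than in your argument: with overlap numbers $a,b$ the two lemmas give $\optturns=\max\set*{\ceils*{\frac{\ceils{a/c}+\ceils{b/c}}{k}},\,2\ceils*{\frac{a}{ck}}-1}$, which coincides with the displayed formula only when $a$ and $b$ are reinterpreted as the subroute counts $\ceils{a/c}$ and $\ceils{b/c}$ (as the paper's proof silently does when it lets \cref{lem:coloring} ``obtain the minimum number $a$ of feasible ascending subroutes'', and as holds verbatim when $c=1$); for instance, with $c=2$, $k=1$, four pairwise overlapping ascending requests and no descending ones, the lemmas give $\optturns=3$ whereas the formula read with overlap numbers gives~$7$.
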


\newpage
\section{Hardness Results}\label{sec:hard}
In this section, we show that all remaining \textsc{liDARP} and \textsc{MinTurn} cases are strongly \NP-hard, using reductions from \textsc{3-Partition}, which is well known to be strongly \NP-hard, see~\cite{garey_computers_1979}.

\begin{definition}[\textsc{3-Partition}~\cite{garey_computers_1979}]
    \label{def:3-partition}
    Given a finite set~$S$ of~$n=3m$ positive integers as well as a bound~$T\in\N$ such that~$\sum_{s\in S}s=mT$ and~$T/4<s<T/2$ for all~${s\in S}$, is there a partition of~$S$ into~$m$ disjoint sets~$S_1,\dots,S_m$ such that~$\sum_{s\in S_j}s=T$ for all~$j\in [m]$?
\end{definition}

We begin by showing hardness of \textsc{MinTurn} even in the absence of time windows and shortcuts.

\begin{restatable}[\restateref{thm:sp-st-nphard}]{theorem}{SPSTHardness}
    \label{thm:sp-st-nphard}
    The problem \textsc{MinTurn} is strongly \NP-hard for all~$k\geq 1$ and $c\geq 2$ even without time windows, shortcuts, and turn time. 
\end{restatable}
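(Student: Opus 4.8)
The plan is to reduce from \textsc{3-Partition}. Given an instance $(S,T)$ with $S=\{s_1,\dots,s_{3m}\}$, I would build a \textsc{MinTurn} instance without time windows, shortcuts, or turn time (setting $t_\textrm{turn}=0$, which by the remark in \cref{sec:contribution} is irrelevant for the complexity) in which all requests are ascending. Then $b=0$, and since there are no time windows, \cref{thm:poly-lidarp} guarantees that every optimal solution serves all requests, while \cref{lem:mt-subroute-combination} gives, for $k=1$, that $\optturns = 2a-1$, where $a$ is the minimum number of feasible ascending subroutes needed to serve all ascending requests. Hence a \textsc{MinTurn} value of $2m-1$ certifies $a=m$, and I would design the gadget so that $a=m$ holds exactly when $(S,T)$ is a yes-instance.

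The core gadget realizes subroutes as bins of capacity $T$. Because there are no shortcuts, every passenger travels a fixed distance, so the only slack controlled by the service promise is the service time accrued from co-riders; the promise thus acts as a per-subroute budget and the service time $t_s$ as the cost of serving a co-rider. Concretely, I would introduce $m$ mutually overlapping long \emph{anchor} requests, each spanning the whole active region, and calibrate $\alpha$ so that an anchor tolerates exactly the service operations generated by $2T$ co-rider boardings and alightings. Each integer $s_j$ is encoded by a cluster of $s_j$ short \emph{item} requests placed in a subregion specific to $j$, positioned so that they are served one at a time alongside the anchor; this keeps the on-board load at two, so that capacity $c=2$ suffices, while making each item of integer $j$ consume one unit of the anchor's budget. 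A subroute (one anchor plus its items) is then feasible iff the total size of its items is at most $T$, and since $T/4<s_j<T/2$ these bounds force exactly three items per anchor.

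For correctness, a valid partition $S_1,\dots,S_m$ yields $m$ feasible subroutes (anchor $i$ serving the three items of $S_i$, summing to $T$), so $a=m$ and $\optturns=2m-1$; conversely, any solution with $a=m$ must place each of the $m$ pairwise-overlapping anchors in its own subroute and distribute all $3m$ items so that each anchor's budget $T$ is respected, which is exactly a \textsc{3-Partition}. A no-instance therefore forces $a\ge m+1$ and $\optturns\ge 2m+1$, giving a constant additive gap; as \textsc{3-Partition} is strongly \NP-hard and all distances, $t_s$, and the budget are polynomial in the (unary) input, this establishes strong \NP-hardness for $k=1$ and $c=2$. I would extend to arbitrary $c\ge 2$ by reducing the effective free capacity back to two (e.g.\ with $c-2$ region-spanning filler requests, or by serving $c-1$ co-riders in parallel and scaling the cluster sizes), and to arbitrary $k\ge1$ by placing $k$ non-interfering copies of the gadget together with a blocking request per copy, so that no single subroute can serve items from two copies; then $a=km$ and $\optturns = 2\lceil a/k\rceil -1 = 2m-1$ precisely in the yes-case.

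The main obstacle is the gadget's exactness rather than the high-level reduction. I must ensure simultaneously that (i) the service-promise budget encodes ``item sizes sum to at most $T$'' and nothing coarser, (ii) items cannot be split or rerouted between subroutes to beat the partition bound (atomicity), and (iii) all of this holds with on-board load never exceeding two, so that the hardness genuinely survives at $c=2$ and without shortcuts. Verifying that the induced subroute-feasibility predicate coincides precisely with the valid \textsc{3-Partition} bins, and that the copy-separation gadget forbids inter-copy merging for every $k$, is the delicate, calculation-heavy part.
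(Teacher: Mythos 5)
Your high-level skeleton matches the paper's proof: a reduction from \textsc{3-Partition} in which all requests are ascending, $m$ long ``anchor'' requests (the paper's promise requests) have their service-promise slack consumed, at cost $2t_{\textrm{s}}$ per co-rider, by unit-length item requests (the paper's value requests), and $\optturns = 2a-1$ follows from \cref{lem:mt-subroute-combination}. However, there is a genuine gap, and it sits exactly where you flag ``the delicate, calculation-heavy part'': as described, your gadget does not encode \textsc{3-Partition} at all. Since each integer $s_j$ is represented by $s_j$ \emph{independent} unit requests and the only constraint is that each anchor tolerates at most $T$ item co-riders, the $mT$ unit items can always be split up and distributed with exactly $T$ per anchored subroute --- regardless of whether the instance is a yes-instance --- so $a=m$ always and the reduction distinguishes nothing; the remark that $T/4<s_j<T/2$ ``forces three items per anchor'' has no force because nothing makes a cluster atomic. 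Supplying that atomicity is precisely the paper's plug-request gadget: for each $i$ there are $m-1$ long plug requests spanning the whole cluster region $[4+T(i-1),4+Ti]$ and one short plug request $[4+T(i-1)+s_i,4+Ti]$, so that (with one seat permanently held by the promise request) every subroute except the one carrying the short plug is blocked from serving any request of $P_\textrm{V}^i$. A second missing mechanism: nothing in your construction forces each anchor into its own subroute. With $c=2$, two anchors could share a subroute, leaving some subroute anchor-free and hence budget-free, able to absorb all excess items in a no-instance. The paper's filter requests $[2,3]$ close this hole: $\alpha<2$ and $t_{\textrm{s}}=1$ imply no two filters fit in one subroute, so each of the $m$ subroutes carries exactly one filter and therefore, by capacity, exactly one promise request. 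You correctly identify both requirements as obstacles, but resolving them \emph{is} the inventive content of the proof; without concrete gadgets for them the argument fails.

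Your extensions to general $k$ and $c$ also diverge from the paper and inherit the same unresolved issue. The paper handles $k>1$ not by spatial copies but by adding $(k-1)mc$ extra promise requests that completely fill the subroutes of the additional vehicles; your ``$k$ non-interfering copies plus a blocking request per copy'' is not specified, and without time windows a subroute may drop its copy-one anchor and then serve copy-two items free of any budget, so copy separation does not come for free. (Your $c>2$ idea of occupying $c-2$ seats with region-spanning fillers does match the paper's generalization, which adds promise requests and re-calibrates $\alpha$.)
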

\begin{proof}
    We begin by showing the reduction from \textsc{3-Partition} for~$k=1$ and $c=2$. Let~$(S,m,T)$ be an instance of \textsc{3-Partition} and~$S=\set{s_1,\dots,s_n}$. 
    
    We create an instance of \textsc{MinTurn} such that the ascending subroutes in an optimal solution of the \textsc{liDARP} with minimum turns per route
    correspond to a $3$-partition of~$S$. An example of such an instance can be seen in~\cref{fig:sp-st-hardness}. 
    
    We begin by having stops~$H=\angles{1,\dots,4+4Tn}$ with unit distance between neighboring stops. 
    We create four types of requests: for each~$i\in[n]$ we create~$s_i$ \emph{value requests}~$P_\textrm{V}^i=\set{([4+T(i-1)+(j-1),4+T(i-1)+j])\mid j\in[s_i]}$ and~$m$ \emph{plug requests}~$P_\textrm{P}^i=P_\textrm{LP}^i\cup\set{p_\textrm{P}^i}$. The plug requests consist of~$m-1$ \emph{long plug requests}~$P_\textrm{LP}^i$, each being~$([4+T(i-1),4+Ti])$, and one \emph{short plug request} $p_\textrm{P}^i=([4+T(i-1)+s_i,4+Ti])$. 
    We also create $m$ \emph{promise requests} $P_\textrm{SP}$, each being~$([1,4+4Tn])$, which are used in combination with the service promise to ensure that the number of value requests in a subroute does not exceed~$T$. Lastly, we create~$m$ \emph{filter requests}~$P_\textrm{F}$, each being~$([2,3])$, which are used to ensure that each subroute contains exactly one promise request. We set the service time~$t_{\textrm{s}}=1$ and the service promise~$\alpha=1+b/a$ with~$a=3+4Tn$ and~$b=2(1+T+n)$. Note that~$b<a$ and thus~$\alpha<2$.
    
    We now show that~$S$ has a $3$-partition if and only if~$\optturns=2m-1$ for the constructed \textsc{MinTurn} instance.
        
    \begin{figure}
        \centering
        \includegraphics{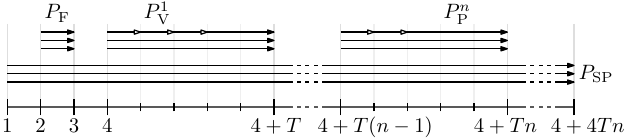}
        \caption{
        A \textsc{MinTurn} instance constructed from a \textsc{3-Partition} instance with~$m=3$ and~$T=5$, as well as~$s_1=3$ and~$s_n=2$. The arrows represent the requests, with the white tipped arrows being value requests.
        \label{fig:sp-st-hardness}}
    \end{figure}   
    
    From~\cref{thm:poly-lidarp}, we know that in an optimal \textsc{liDARP} solution to the constructed instance all requests are served. As we see in Observation~$1$, each filter request has to be served by a different subroute, requiring at least~$m$ ascending subroutes to serve all requests. We thus need at least~$2m-1$ turns to serve all requests.
    Assume that we have an optimal solution to the \textsc{liDARP} that uses~$2m-1$ turns. Its route must thus contain~$m$ ascending subroutes.
    
    \emph{Observation 1:} Each ascending subroute serves exactly one filter and one promise request. Indeed, as the service promise~$\alpha$ is less than~$2$ and the service time~$t_{\textrm{s}}$ is~$1$, the maximum ride time of a filter request is less than~$2$, due to its direct time distance being~$1$, and would thus be exceeded if another filter request is served by the same subroute. Since the~$m$ ascending subroutes serve all requests, each ascending subroute must thus contain exactly one filter request. It follows that each ascending subroute must also contain exactly one promise request, since promise and filter request overlap.
    
    \emph{Observation 2:} Due to the service promise, the maximum ride time of a promise request is at most~$b$ more than the direct time distance. As all other requests lie between the origin and destination of a promise request and the service time is~$1$, at most~$1+T+n$ requests can be transported in a subroute besides a promise request. According to Observation~$1$, one of these requests must be a filter request.
    
    \emph{Observation 3:} Each subroute also has to serve exactly one plug request for each~$i\in[n]$, as there are~$m$ such requests and they overlap each other as well as the promise requests.

    \emph{Observation 4:} Combining Observations~$2$ and $3$, we conclude that each subroute may serve up to~$T$ value requests. As the total number of value requests is~$mT$, this means that each subroute transports exactly~$T$ such requests.

    \emph{Observation 5}: All requests in~$P_\textrm{V}^i$ must be served by the same subroute, the one that serves the short plug request~$p_\textrm{P}^i$, as all the other subroutes contain long plug requests that overlap all requests in~$P_\textrm{V}^i$. That is, for each~$i\in[n]$ we have exactly one subroute that serves the~$s_i$ value requests~$P_\textrm{V}^i$. In combination with Observation 4 we conclude that, for each subroute~$r_j$, we have an index set~$I_j$, such that all value requests in~$\bigcup_{i \in I_j} P_\textrm{V}^i$ are served by the subroute~$r_j$ and $\sum_{i \in I_j} s_i=T$. 
    Setting~$S_j:=\{s_i\mid i\in I_j\}$, we thus obtain a valid $3$-partition of~$S$.

    Conversely, if there exists a $3$-partition~$S_1,\dots,S_m$ of~$S$, we create~$m$ ascending subroutes~$r_1,\dots,r_m$ and assign for each~$s_i\in S_j$ for~$j\in[m]$ the value requests in~$P_\textrm{V}^i$ as well as the short plug request in~$p_\textrm{P}^i$ to the~subroute~$r_j$. To each of the remaining ascending subroutes, we assign one of the long plug requests from $P_\textrm{LP}^i$. Furthermore, we assign one filter and one promise request to each ascending subroute. In this way, all requests are assigned to a subroute. Additionally, the capacities are respected, as no more than~$2$ requests pairwise overlap in a subroute. 
    By adding~$m-1$ artificial descending subroutes, we connect the ascending subroutes and obtain a route with~$2m-1$ turns. For this route to be feasible it remains to show, according to~\cref{lem:route-joining}, that the subroutes are feasible.
    The only requests which are not transported directly are the promise requests. By construction, each subroute serves, besides the promise request, one filter,~$n$ plug, and~$T$ value requests. Therefore, 
    the delay in the ride time of a promise request is~$2(1+T+n)$, which is precisely the allowed delay by the service promise, and the subroutes are thus feasible.
    
    For higher capacities and more vehicles we duplicate the promise requests, such that they fill the added seats. Apart from a small adjustment of the service promise and subsequently the direct time distance of promise requests, the construction and correctness are analogous.

    Constructing these instances takes pseudo-polynomial time. As \textsc{3-Partition} is strongly \NP-hard, it follows that the \textsc{MinTurn} problem is strongly \NP-hard.
    \qed
\end{proof}

The presented reduction can be adapted to show strong \NP-hardness for the case without service times but instead with shortcuts, by encoding the values~${s\in S}$ into detours of the line that can be shortcut by a subroute if it does not serve the corresponding requests.

\begin{restatable}[\restateref{thm:sp-sc-nphard}]{theorem}{SPSCHardness}
    \label{thm:sp-sc-nphard}
    The problem \textsc{MinTurn} is strongly \NP-hard for all $k\geq 1$ and $c\geq 2$ even without time windows, service times and turn times.
\end{restatable}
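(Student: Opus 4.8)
The plan is to reuse the reduction from \textsc{3-Partition} developed for \cref{thm:sp-st-nphard}, keeping its four families of requests (value, plug, promise, and filter requests) and the target value $\optturns = 2m-1$, but replacing the role of the service time by \emph{detours} in the line metric that a subroute pays for only if it actually serves the corresponding requests. Concretely, I would keep the promise requests spanning the whole line and the $m$ long/short plug requests per block, and set $t_{\textrm{s}} = t_\textrm{turn} = 0$. In \cref{thm:sp-st-nphard} each served value request cost $2t_{\textrm{s}}$ of ride time on the promise request sharing its block; here I instead replace the value region of block~$i$ by a ``zig-zag'' of $s_i$ spike stops, where consecutive base stops $A^{j-1}_i, A^{j}_i$ admit a cheap shortcut but the intermediate spike top $B^{j}_i$ is reachable only by a longer detour. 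The $j$-th value request of block~$i$ is $([A^{j-1}_i, B^{j}_i])$, so serving it forces the vehicle, and with it the promise request on board, over the entire spike.

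First I would argue, as in Observations 2--5 of \cref{thm:sp-st-nphard}, that all value requests of a block are served by the one subroute carrying that block's short plug request: a subroute carrying the block's long plug has both seats occupied across the value region and so cannot serve them, and conversely such a subroute pays nothing for the block, since it can shortcut over all of the block's spikes. The total detour charged to a subroute's promise request therefore equals the spike length times the total value it serves. Choosing the service promise $\alpha$ so that the promise request's slack $(\alpha-1)\cdot t_{o_p,d_p}$ equals exactly $T$ spikes (plus the filter cost below) then forces each subroute to serve value summing to at most $T$; since the total value is $mT$ and there are $m$ subroutes, each serves exactly $T$, which is precisely a $3$-partition.

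The main obstacle is re-engineering the filter gadget, which in \cref{thm:sp-st-nphard} relied on service times: under shortcuts two \emph{identical} $([2,3])$-requests can be carried together at no extra cost, so the old argument that at most one filter fits per subroute fails. I would instead make the $m$ filter requests \emph{distinct}, sending each from a common start to its own spike top $E^{j}$, and give the filters a tight service promise. Then serving two filters $f_j, f_{j'}$ with $E^{j} < E^{j'}$ forces the vehicle carrying $f_{j'}$ to detour to $E^{j}$ and back, adding a full spike to $f_{j'}$'s ride time and violating its tight promise; crucially this bound uses only the filter's own service promise, so it holds regardless of how the promise requests are distributed. This re-establishes ``at most one filter per subroute'', hence at least $m$ ascending subroutes and $\optturns \ge 2m-1$. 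A subroute carrying two promise requests has both seats full across the whole line and so cannot also serve its mandatory filter, which together with the counting of $m$ promises over $m$ subroutes forces exactly one promise request per subroute and thus validates the per-subroute budget argument.

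Finally I would verify the converse as in \cref{thm:sp-st-nphard}: from a $3$-partition, build $m$ ascending subroutes, assigning to subroute~$j$ one promise request, one filter request, and for each block either its short or a long plug request (serving the block's value requests exactly when the short plug is assigned), so that every promise request's ride time meets its bound with equality. By \cref{lem:route-joining} these subroutes join, via $m-1$ artificial descending subroutes, into a feasible route with $2m-1$ turns, and \cref{thm:poly-lidarp} guarantees that all requests are served in an optimum. The extension to arbitrary $k\ge 1$ and $c\ge 2$ follows as before by duplicating promise requests to occupy the extra seats and distributing subroutes over vehicles. Since the instance has pseudo-polynomial size and \textsc{3-Partition} is strongly \NP-hard, strong \NP-hardness of \textsc{MinTurn} follows.
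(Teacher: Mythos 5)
Your proposal is sound and follows the same overall strategy as the paper's proof: a reduction from \textsc{3-Partition} in which the service-time costs of \cref{thm:sp-st-nphard} are replaced by detours that a subroute pays only if it serves the corresponding requests, promise requests act as budget meters via a calibrated service promise~$\alpha<2$, a re-engineered filter gadget forces exactly one filter and hence one promise per subroute, and the equivalence ``$\optturns=2m-1$ if and only if a $3$-partition exists'' is concluded via \cref{thm:poly-lidarp} and \cref{lem:route-joining}; your diagnosis that the old identical-filter gadget breaks without service times is exactly the issue the paper's construction also has to address. The difference is at gadget level, and the paper's realization is leaner: instead of your $s_i$ unit spikes per block bound together by plug requests, the paper encodes each element~$s_i$ as a \emph{single} value request on a block of stops $h_i^1,h_i^2,h_i^3,h_i^4$ with distances $s_i,1,s_i$ and a unit shortcut from~$h_i^1$ to~$h_i^4$, so that serving it costs a detour of exactly~$2s_i$. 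This makes plug requests unnecessary (no binding is needed when each element is one request) and keeps the number of stops independent of~$T$, so the paper's construction runs in genuinely polynomial time, as it points out, whereas yours is pseudo-polynomial --- which, as you correctly note, still suffices for strong \NP-hardness. The filter gadgets also differ: the paper routes filters from distinct origins~$h_\textrm{f}^j$ to a common destination~$h_\textrm{f}^\textrm{e}$ with shortcuts arranged so that all filters have direct distance~$1$ and serving a single filter costs the promise request nothing extra, while yours uses a common origin and distinct spike-top destinations and charges a constant spike cost to the promise budget; both rest on the filters' own tight promise, which is the right mechanism. One point your write-up leaves unverified and that a full proof must pin down: since $\alpha$ is one global parameter and each request's slack is $(\alpha-1)$ times its own direct distance, your filters' direct distances grow with the index of their spike top, so you must check that every filter's slack stays strictly below one spike detour while the promise slack equals the budget exactly --- a routine calculation, but one the paper sidesteps entirely by giving all filters direct distance~$1$.
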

\begin{proof}
    We begin by proving hardness for~$k=1$ and~$c=2$ before extending it to higher values. 
    Let~$(S,m,T)$ be an instance of \textsc{3-Partition} and~$S=\set{s_1,\dots,s_n}$. 
    
    As we use shortcuts, we start by describing the layout of the line, which can be seen schematically in~\cref{fig:sp-sc-hardness}.

    \begin{figure}
        \centering
        \includegraphics{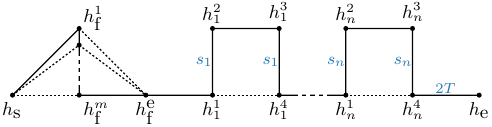}
        \caption{
            The layout of the stops constructed for a \textsc{3-Partition} instance. The line is the continuous path, while shortcuts are represented by (dotted) lines. The distances between neighboring stops are given by (blue) labels, with the exception of distances of~$1$, which are omitted.
        }
        \label{fig:sp-sc-hardness}
    \end{figure}
    
    The line starts at stop~$h_\textrm{s}$ then contains (in the order in which they are listed here) a sequence of stops~$h_\textrm{f}^j$ for~$j\in [m]$, a stop~$h_\textrm{f}^\textrm{e}$, for each~$s_i\in S$ a sequence of stops~$h_i^1,h_i^2,h_i^3,h_i^4$, and the final stop~$h_\textrm{e}$.
    For each~$i\in [n]$, the distance between~$h_i^1$ and~$h_i^2$ as well as between~$h_i^3$ and~$h_i^4$ is~$s_i$. The distance between~$h_n^4$ and~$h_\textrm{e}$ is~$2T$.
    For all other pairs of subsequent stops, the distances are~$1$. 
    We have a number of shortcuts: the direct distance from $h_\textrm{s}$ to $h_\textrm{f}^j$ and from $h_\textrm{f}^j$ to $h_\textrm{f}^\textrm{e}$ is $1$ for all $j\in[m]$. Furthermore, the direct distance between~$h_i^1$ and~$h_i^4$ is~$1$ for all~$i\in [n]$.
    The~$m$ \emph{promise requests}~$P_\textrm{SP}$ originate at~$h_\textrm{s}$ and end at~$h_\textrm{e}$.
    Each of the~$m$ \emph{filter requests}~$p_\textrm{F}^j\in P_\textrm{F}$ for $j \in [m]$ starts at stop~$h_\textrm{f}^j$ and goes to~$h_\textrm{f}^\textrm{e}$. 
    For each~$i\in [n]$, a \emph{value request}~$p_i$ is added that originates at~$h_i^2$ and goes to~$h_i^3$.
    The service promise is set to~$1+b/a<2$ with~$a=2+2n+2T$ and~$b=2T$.
    
    It then holds that~$S$ has a $3$-partition if and only if
    $\optturns=2m-1$ for the constructed \textsc{MinTurn} instance.
    The proof is similar to the proof of~\cref{thm:sp-st-nphard}.

    Differing from the reduction in~\cref{thm:sp-st-nphard}, we can construct this instance in polynomial time, as our number of stops does not depend on~$T$. It follows that the \textsc{MinTurn} problem is strongly \NP-hard.
    \qed
\end{proof}

By adapting the reductions to use~$m$ vehicles instead of one, we can show that the \textsc{MinTurn} problem is strongly \NP-hard to approximate with a factor better than~$3$. This leads to the following corollary.

\begin{restatable}[\restateref{cor:mt-approx-nphard}]{corollary}{InApproximable}
    \label{cor:mt-approx-nphard}
    The problem \textsc{MinTurn} is strongly \NP-hard to approximate with a factor better than~$3$ for all $c\geq 2$ in the following cases:
    \begin{enumerate}
        \item without time windows, shortcuts, and turn times, and 
        \item without time windows, service times, and turn times.
    \end{enumerate}
\end{restatable}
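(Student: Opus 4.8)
The plan is to re-use the two reductions from \cref{thm:sp-st-nphard} and \cref{thm:sp-sc-nphard} almost verbatim, but to equip the constructed \textsc{MinTurn} instance with $k=m$ vehicles instead of a single one. Both reductions produce instances in which every request is ascending, so no descending subroutes are ever needed and $b=0$ in the notation of \cref{lem:mt-subroute-combination}. Moreover, since there are no time windows, \cref{thm:poly-lidarp} guarantees that an optimal \textsc{liDARP} solution serves all requests, and the minimum number $a$ of feasible ascending subroutes needed to serve all ascending requests is an intrinsic quantity of the request set and the feasibility constraints (capacity and service promise), independent of how many vehicles are available. The correctness arguments of the two theorems therefore carry over unchanged and yield the same dichotomy on $a$: the underlying \textsc{3-Partition} instance admits a partition if and only if $a=m$, while a non-partition forces $a\geq m+1$ (the $m$ filter requests already force $a\geq m$, and $a=m$ is attainable precisely when the values $s_i$ can be grouped into $m$ subroutes each summing to $T$, exactly as shown in the two proofs).

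First I would feed these two values of $a$ into \cref{lem:mt-subroute-combination} with $b=0$ and $k=m$. Since $a\geq m\geq 1$ we have $\ceils*{a/m}\geq 1$ and hence $2\ceils*{a/m}-1\geq\ceils*{a/m}$, so the formula collapses to $\optturns=2\ceils*{a/m}-1$. In the yes-case $a=m$ gives $\ceils*{m/m}=1$ and thus $\optturns=1$, whereas in the no-case $a\geq m+1$ gives $\ceils*{a/m}\geq 2$ and thus $\optturns\geq 3$. This yields a gap in the optimal objective value, $\optturns=1$ on yes-instances versus $\optturns\geq 3$ on no-instances, i.e.\ a multiplicative separation of factor $3$.

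It then remains to turn this gap into an inapproximability statement. Suppose some algorithm approximates \textsc{MinTurn} within a factor $\rho<3$. Running it on the constructed instance returns a feasible solution of value at most $\rho\cdot\optturns$. On a yes-instance, where $\optturns=1$, the returned value is at most $\rho<3$, hence at most $2$ because the number of turns is integral; on a no-instance, where $\optturns\geq 3$, every feasible solution, and in particular the returned one, has value at least $3$. Comparing the returned value against the threshold $3$ therefore decides \textsc{3-Partition}. Since the reductions of \cref{thm:sp-st-nphard} and \cref{thm:sp-sc-nphard} run in (pseudo-)polynomial time and \textsc{3-Partition} is strongly \NP-hard, approximating \textsc{MinTurn} with a factor better than $3$ is strongly \NP-hard in both stated settings, and the extension to arbitrary $c\geq 2$ is inherited from the capacity gadget (duplicated promise requests) already used in the two theorems.

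The main obstacle I anticipate is not the gap computation but the bookkeeping needed to confirm that switching from one vehicle to $m$ vehicles leaves every feasibility argument of the original reductions intact --- in particular that the characterization ``partition exists $\iff a=m$'' is genuinely a statement about individual subroutes and the service promise, and is therefore unaffected by how many vehicles are available to carry those subroutes. Once this independence of $a$ from $k$ is made explicit, the remaining steps follow immediately from \cref{lem:mt-subroute-combination}.
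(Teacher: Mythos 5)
Your proposal is correct and matches the paper's own proof: the paper likewise reuses the constructions of \cref{thm:sp-st-nphard} and \cref{thm:sp-sc-nphard} with $k=m$ vehicles and derives the gap $\optturns=1$ (yes-instances) versus $\optturns\geq 3$ (no-instances). The only cosmetic difference is that you route the counting through \cref{lem:mt-subroute-combination} (with $b=0$, $k=m$), while the paper argues the same bound directly via the pigeonhole principle; both rest on the identical fact that a $3$-partition exists if and only if $m$ ascending subroutes suffice.
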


We now show that as soon as we consider time windows, both the \textsc{liDARP} and \textsc{MinTurn} problem become \NP-hard for arbitrary values of~$k$ and~$c$.

\begin{restatable}[\restateref{thm:mt-tw-nphard}]{theorem}{MTTWHardness}
    \label{thm:mt-tw-nphard}
    Both \textsc{liDARP} and \textsc{MinTurn} are strongly \NP-hard for all $k\geq 1$ and~$c\geq 1$ even without shortcuts, service promise, service times, and turn times.
\end{restatable}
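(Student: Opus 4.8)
The plan is to give a self-contained reduction from \textsc{3-Partition}, using time windows as the sole active feature (all of shortcuts, service promise, service times, and turn times switched off). The core difficulty to overcome is that, unlike in \cref{thm:sp-st-nphard,thm:sp-sc-nphard}, I can no longer bound the ``value'' carried by a single subroute via service times or shortcuts: on a line without shortcuts, the duration of one ascending sweep is governed purely by its geometric extent, not by how many requests it serves. I would therefore encode each input value $s_i$ as a forced round trip whose length is proportional to $s_i$, and use time windows to chop the single vehicle's timeline into $m$ temporal \emph{rounds}, each of budget exactly $T$, so that packing the round trips into rounds becomes equivalent to partitioning $S$.

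Concretely, for $k=1$ I would place a hub at one end of the line and, for each $i\in[n]$, create a value request from the hub to the stop at distance $s_i$; to make the argument independent of the capacity I would take $c$ identical copies of each such request, so that serving value $i$ fills all $c$ seats and hence no two values can share an ascending subroute regardless of $c$. Thus each value is served by its own out-trip (ascending) followed by a return (descending), costing time $2s_i$. I would then add $m-1$ \emph{divider} requests (again in $c$ copies) from the hub to a near stop at distance $D$, with tight time windows pinning divider~$j$ to the gap immediately after round~$j$; choosing $D<\min_i s_i$ guarantees that the $2D$-long gap created by a divider is too short to host any value round trip. Setting the global latest-drop-off deadline to $\Theta=2mT+2(m-1)D$ makes the total value-serving time available exactly $2mT=\sum_i 2s_i$, partitioned by the dividers into $m$ rounds of length $2T$ each.

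For correctness I would show that all requests can be served if and only if $S$ admits a $3$-partition: each round of length $2T$ can accommodate value round trips of total length at most $2T$, i.e.\ total value at most $T$, so serving every value forces each round to be filled to exactly $T$, while conversely any $3$-partition yields such a schedule directly. Because a freed gap cannot fit any value trip, no optimal solution ever skips a divider, so the maximum number of served requests equals $c\parens*{n+m-1}$ exactly when a $3$-partition exists and is strictly smaller otherwise---this already proves strong \NP-hardness of the \textsc{liDARP}. For \textsc{MinTurn}, I would observe that in this instance every served value and every served divider contributes one out-trip and one return, so the number of turns of any optimal (maximum-served) solution is a strictly increasing affine function of the number of served round trips; hence $\optturns=2\parens*{n+m-1}-1$ precisely when a $3$-partition exists and is strictly smaller otherwise, so computing $\optturns$ decides \textsc{3-Partition}. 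I would handle arbitrary $k\ge1$ by replicating the gadget into $k$ vehicle-disjoint copies (or, as in \cref{thm:sp-st-nphard}, by scaling the number of bins), and note that since all distances and times are polynomial in $m$ and $T$, the reduction is pseudo-polynomial, yielding strong \NP-hardness.

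The step I expect to be the main obstacle is the correctness of the round structure: I must verify that the time windows genuinely confine all value service to the $m$ rounds and that no ``cheating'' route---e.g.\ one that interleaves a partial value trip with a divider return, exploits the open setting by not returning to the hub, or reorders subtours---can serve all values without inducing a valid $3$-partition. Pinning this down requires a careful case analysis of how out-trips and returns can overlap the divider windows, using directionality (returns are empty descending subroutes and thus cannot serve the ascending value requests) and the $c$-copy trick (which blocks batching) to rule out every such shortcut.
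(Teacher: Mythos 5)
Your blueprint follows the same route as the paper's own proof: a hub at one end of the line, value requests of length $s_i$ forcing round trips of cost $2s_i$, separator requests pinned by degenerate time windows that cut the horizon into $m$ rounds of budget $2T$, $c$ copies of every request for larger capacities, and $k$ disjoint service areas for more vehicles, with the turn count $2\cdot(\text{number of served round trips})-1$ tying \textsc{MinTurn} to the question of whether all requests can be served. However, your construction has a concrete flaw: you place only $m-1$ dividers, so the last round is bounded by the global deadline $\Theta=2mT+2(m-1)D$ rather than by a pinned request, and in the open setting the final value trip of that round need not return to the hub. Hence the last round can host values summing to \emph{more} than $T$: serving its largest value $s$ last costs only $2\cdot(\text{sum})-s$. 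Counterexample: $T=13$, $m=2$, $S=\set{4,4,4,4,4,6}$ is a valid \textsc{3-Partition} instance with no $3$-partition (every triple sums to $12$ or $14$), yet with $D=1$ one can serve $\set{4,4,4}$ in round one ($24\le 26$, then wait for the divider) and $\set{4,4,6}$ in round two with the $6$ served last and no return ($8+8+6=22\le 26$), so \emph{all} requests are served. Thus ``all requests served $\Rightarrow$ $3$-partition'' fails, and with it both your \textsc{liDARP} and \textsc{MinTurn} claims. Note that this is not something the ``careful case analysis'' you defer to the end can repair: it is the instance, not the argument, that is deficient.

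The paper's construction differs from yours in exactly the detail that closes this loophole: it uses $m$ separators, the last one pinned to the very end of the horizon (window $[2mT+2(m-1),\,2mT+2m-1]$, coinciding with the value requests' deadline). The vehicle must therefore be back at the hub at time $2mT+2(m-1)$, so \emph{every} value trip is a full round trip, the time needed ($2\sum_i s_i=2mT$) matches the available time exactly, and the packing argument becomes airtight. (An alternative repair would be to shorten your deadline to $\Theta-\max_i s_i$.) A second, smaller gap: for $c>1$ your assertion that the $c$-copy trick ``blocks batching'' is not immediate, since copies of different values can share a subroute whenever not all copies of a value travel together (there is no service promise to forbid the detour); the paper closes this with an explicit exchange argument showing that some all-serving solution transports all copies of each value together, and your write-up would need the same step.
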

\begin{proof}[sketch]
    Again, we use \textsc{3-Partition} for the reduction. The main idea of the proof for~$k=1$ and~$c=1$ is to translate the values~$s\in S$ into value requests that need time~$2s$ to be served. We then use \emph{separator} requests to create~$m$ time intervals of length~$2T$ during which the value requests must be served if all requests are served. Thus, the assignment of value requests to time intervals corresponds to a~$3$-partition of~$S$.
    \qed
\end{proof}

\section{Parameterized Algorithms}
\label{sec:parameterized-algos}
Seeing as the \textsc{liDARP} and \textsc{MinTurn} problem are \NP-hard, we now provide parameterized algorithms for both. Recall that an algorithm is \emph{fixed-parameter tractable} (\FPT) w.r.t.\ a parameter~$k$ if its runtime is~$f(k)\cdot n^{O(1)}$ for some computable function~$f$, where $n$ is the size of the input. 
An algorithm is \emph{slice-wise polynomial} (\XP) w.r.t.\ a parameter~$k$ if its runtime is in~$O(n^{f(k)})$ for some computable function~$f$. 
When analyzing the runtime, we use the~$O^*$ notation, which suppresses polynomial factors in the input size, i.e., a function~$g(n)$ is in~$O^*(f(n))$ if there is a polynomial~$p$ such that~$g(n)\in O(f(n)\cdot p(n))$.

As we have shown the \textsc{liDARP} and \textsc{MinTurn} problem to be \NP-hard for constant~$k$ and~$c$, we have to consider more parameters to obtain parameterized algorithms. We therefore use the number of stops~$h$ as well as the maximum time~$t:=\max_{p\in P}l_p+1$.

\begin{restatable}[\restateref{thm:minturn-fpt}]{theorem}{MinTurnFPT}
    \label{thm:minturn-fpt}
    There exists an \FPT-algorithm for \textsc{MinTurn} as well as \textsc{liDARP} parameterized by~$k$,~$c$,~$h$ and~$t$, with a runtime in $O^*((h^2\cdot t^3\cdot c\cdot k)^{2\cdot t\cdot c\cdot k})$.
\end{restatable}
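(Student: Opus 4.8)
The plan is to solve both problems by exhaustive enumeration, where a candidate solution is described by the timestamped sequence of pick-up/drop-off waypoints of each of the $k$ vehicles. The whole argument rests on two bounds. First, I would show that every feasible solution uses at most $2tck$ waypoints in total: a served request occupies one of the $c$ seats of its vehicle during the entire ride, and the ride time is at least $t_{o_p,d_p}\ge 1$ (origins and destinations are distinct and all distances are positive integers), so the occupancy intervals that share a seat are interior-disjoint integer intervals of length at least $1$ inside $\{0,\dots,t-1\}$. Hence each seat carries at most $t$ passengers, each vehicle serves at most $ct$ requests, and, as every served request contributes exactly one pick-up and one drop-off, the number of waypoints is at most $2tck$. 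Second, I would bound the number of choices for a single waypoint by $h^2t^3ck$: a waypoint is fully specified by the vehicle ($k$ choices), the seat it uses ($c$ choices), its timestamp ($t$ choices, as all relevant times lie in $\{0,\dots,t-1\}$), and the type $([o_p,d_p],[e_p,l_p])$ of the request it serves ($h^2t^2$ choices), where whether the waypoint is a pick-up or a drop-off is inferred from the seat's occupancy earlier in the sequence.

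Given these bounds, the algorithm iterates over all at most $(h^2t^3ck)^{2tck}$ such waypoint assignments. Each assignment directly yields, for every vehicle, a timestamped route: the position of a pick-up (resp.\ drop-off) waypoint is the origin (resp.\ destination) of its request type. For each candidate I would then check feasibility in polynomial time --- time consistency of the timestamps with travel, service and turn times, the time windows, the service promise, directionality, and capacity (automatically at most $c$ via the seat labels) --- invoking \cref{lem:route-to-tour}, and additionally verify that for each request type the number of served requests of that type does not exceed the number available in the instance. This last check is what prevents serving a request more than once while only ever manipulating request \emph{types} rather than identities. The number of requests served by a feasible candidate is then half its number of waypoints, and its per-vehicle turn counts can be read off as the number of direction changes along each vehicle's route.

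To see that the enumeration is complete, I would argue that any feasible solution can be relabeled to fit the above description: within a single vehicle at most $c$ passengers overlap in time, so their conflict intervals form an interval graph that is $c$-colorable, giving a valid assignment of passengers to seats in $[c]$ (cf.\ the coloring argument of \cref{lem:coloring}), and the resulting waypoint sequence then lies in the enumerated search space. Consequently the maximum objective over all feasible candidates equals the \textsc{liDARP} optimum, and the minimum over those optimal candidates of the largest per-vehicle turn count equals $\optturns$, which solves \textsc{MinTurn}. Since every candidate is processed in time polynomial in the input, the total runtime is $O^*((h^2t^3ck)^{2tck}) = f(k,c,h,t)\cdot n^{O(1)}$, hence \FPT. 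The hard part will be the first bound: making precise that one may restrict to solutions without redundant waypoints, and that the seat-occupancy and interval-packing argument remains valid across all special cases simultaneously (in particular with zero service and turn times), so that $2tck$ is a genuine upper bound on the number of waypoints of \emph{some} optimal solution for each of the two problems.
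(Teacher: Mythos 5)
Your proposal is correct, and it rests on exactly the two counting facts that power the paper's proof: a vehicle can serve at most $t\cdot c$ requests, because same-seat occupancy intervals of length at least~$1$ must be packed into $\set{0,\dots,t-1}$ (this is the paper's \cref{lem:bound-served-pax}), and there are at most $h^2\cdot t^2$ distinct request types (the paper's \cref{lem:bound-requests}). What differs is how these are assembled into an algorithm. The paper first proves an \XP-result (\cref{lem:event-based-xp}): it enumerates feasible \emph{untimed routes} over request identities as short paths in the event-based graph and then searches over all collections of up to~$k$ routes; the \FPT-bound then follows by a kernelization step that caps each request type at $t\cdot c\cdot k$ copies, so the instance fed to the \XP-algorithm has size $h^2t^2\cdot(tck)=h^2t^3ck$. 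You skip both the kernelization and the event-based graph and instead enumerate, at type level, timestamped, seat-labeled waypoint sequences, enforcing ``each request served at most once'' by a multiplicity check against the instance, and proving completeness by $c$-coloring the per-vehicle occupancy interval graph (the same argument as in \cref{lem:coloring}); it is a neat coincidence that your per-waypoint alphabet $k\cdot c\cdot t\cdot h^2t^2$ equals the paper's kernel size, so both yield $O^*((h^2t^3ck)^{2tck})$. Your version is more self-contained but must explicitly justify working with integer timestamps in $\set{0,\dots,t-1}$ (fine, since the paper's conventions make time integral and time windows bound the start of every waypoint of a served request by $t-1$) and must manage seats, both of which the paper's untimed-route formulation delegates to \cref{lem:route-to-tour}. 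Two minor repairs: by the paper's convention $\abs{r}$ counts subtours, i.e., direction changes \emph{plus one}, not direction changes, though this is cosmetic; and your closing worry about restricting to solutions ``without redundant waypoints'' is unfounded, since waypoints exist only for served requests, so the $2tck$ bound holds for \emph{every} feasible solution rather than only for some optimal one --- which is precisely what makes both the \textsc{liDARP} maximization and the subsequent $\min$-$\max$ extraction of $\optturns$ over the optimal candidates sound.
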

\begin{proof}[sketch]
    To prove this result, we enumerate all routes that could be part of a \textsc{liDARP} solution. As we can bound the number of requests that a single vehicle can serve in time $t$ by $t\cdot c$, we can enumerate all feasible routes in time $O^*(n^{2\cdot t\cdot c})$, using the \emph{event-based graph}~\cite{DBLP:journals/eor/GaulKS22,reiter_line-based_2024}. 
    For all collections of up to~$k$ routes, we find the one which serves the most requests and minimizes the maximum turns per route.
    This results in a runtime of~$O^*(n^{2\cdot t\cdot c\cdot k})$.

    To obtain an \FPT-runtime, we slightly modify the algorithm such that it reduces inputs to a predetermined size, using the observation that there are at most~$h^2\cdot t^2$ distinct requests which can be served at most~$t\cdot c\cdot k$ times each.
    \qed
\end{proof}

Note that the \FPT-algorithm can be adapted easily to work for several other objectives and restrictions, such as maximizing the weighted sum of saved distance and transported requests, as used in~\cite{reiter_line-based_2024}, or minimizing the makespan while serving all requests, as studied in most complexity papers on \textsc{DARP} on a line, see \cref{tab:lit_review:overview}.

However, the case without time windows poses some difficulties. In this paper we treat this as a special case of the general case by setting~$l_p=\infty$, which implies~$t=\infty$. 
Determining a better bound for~$l_p$ is of no use, as it would depend on~$n$.
Thus, we now devise an \XP-algorithm for the case without time windows whose running time is parameterized by~$c$ and~$h$.

\begin{restatable}[{\restateref{thm:minturn-xp-no-tw}}]{theorem}{MinTurnXPNoTW}
   \label{thm:minturn-xp-no-tw}
   There is an \XP-algorithm for the \textsc{MinTurn} problem without time windows, parameterized by~$c$ and~$h$, with runtime~$O^*(n^{h^2}\cdot h^{4\cdot c\cdot h})$.
\end{restatable}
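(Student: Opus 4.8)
The plan is to reduce, via \cref{lem:mt-subroute-combination}, the computation of $\optturns$ to determining only the two quantities $a$ and $b$: the minimum number of feasible ascending (resp.\ descending) subroutes needed to serve all ascending (resp.\ descending) requests. Since there are no time windows, \cref{thm:poly-lidarp} guarantees that every optimal solution serves all requests, so once $a$ and $b$ are known (swapping them so that $a\ge b$), \cref{lem:mt-subroute-combination} yields $\optturns=\max\{\lceil(a+b)/k\rceil,\,2\lceil a/k\rceil-1\}$ directly. The two computations are symmetric (reverse the order of stops), so I would focus on computing $a$.

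The first key observation is that, because there are no time windows, two requests sharing the same origin and destination are interchangeable: any feasible subroute serving one can instead serve the other. Hence a subroute need only be described by its \emph{profile}, the vector $\vec m=(m_{ij})_{i<j}$ recording how many served requests have origin $i$ and destination $j$; there are at most $\binom h2<h^2$ request types. I would then argue that feasibility of a profile is well defined and testable in polynomial time: the capacity constraint depends only on $\max_x|\{p:o_p\le x<d_p\}|$, which is a function of $\vec m$ alone, while the service promise is most easily met by the \emph{canonical} realization that takes maximal shortcuts, visiting exactly the origins and destinations of the served requests, because shortcuts only decrease ride times without changing the simultaneous load. Simulating this canonical subroute with the service times then lets one check capacity and the service promise for every served request.

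Next I would bound the size of a feasible subroute: the load on each of the $h-1$ unit gaps is at most $c$, and each served request covers at least one gap, so a subroute serves at most $c(h-1)<ch$ requests. Consequently the number of feasible profiles is at most the number of multisets of size at most $ch$ over fewer than $h^2$ types, which is $h^{O(ch)}$ and in particular bounded by $h^{4ch}$; I would precompute and store all feasible profiles within this bound. I would then solve for $a$ by dynamic programming over residual count vectors: for $\vec r$ with $0\le r_{ij}\le n_{ij}$, let $D[\vec r]$ be the minimum number of feasible ascending subroutes whose profiles sum to $\vec r$, so $D[\vec 0]=0$ and $D[\vec r]=1+\min_{\vec m}D[\vec r-\vec m]$ over feasible profiles $\vec m\le\vec r$, with $a=D[\vec n_{\mathrm{asc}}]$. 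Serving a single request is always feasible, so $D$ is finite everywhere, and a partition of the requests into feasible subroutes corresponds exactly to a choice of profiles summing to $\vec n_{\mathrm{asc}}$, making the recurrence correct.

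For the runtime, the number of states is $\prod_{i<j}(n_{ij}+1)\le(n+1)^{h^2}=O^*(n^{h^2})$, and each state examines at most $h^{4ch}$ profiles, which together with the symmetric computation of $b$ and the final evaluation of \cref{lem:mt-subroute-combination} gives $O^*(n^{h^2}\cdot h^{4ch})$. I expect the main obstacle to be the middle step: rigorously proving that feasibility of a subroute depends only on its profile and can be decided in polynomial time, in particular justifying that maximal shortcuts are without loss of generality and correctly handling the interaction of per-passenger service times with the service promise. The dynamic program itself, the counting of states and profiles, and the combination via \cref{lem:mt-subroute-combination} are then routine.
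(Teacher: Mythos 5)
Your proposal follows the same skeleton as the paper's proof: reduce \textsc{MinTurn} to computing $a$ and $b$ via \cref{lem:mt-subroute-combination}, exploit that without time windows there are at most $h^2$ distinct request types and that a feasible subroute serves at most $c\cdot h$ requests, bound the number of relevant subroutes by $O^*(h^{4\cdot c\cdot h})$, and solve the resulting covering problem in $O^*(n^{h^2}\cdot h^{4\cdot c\cdot h})$. The structural difference in the covering step is benign: the paper casts it as \textsc{Multiset Multicover} and invokes the algorithm of Hua et al.~\cite{DBLP:journals/tcs/HuaWYL10} as a black box, whereas you give an explicit dynamic program over residual demand vectors; your state space ($O^*(n^{h^2})$ vectors) and per-state work ($O^*(h^{4\cdot c\cdot h})$ profiles) match the cited bound, and the exact-sum recurrence is correct because any partition of the requests into feasible subroutes induces profiles summing exactly to the demand vector (and singleton subroutes keep $D$ finite everywhere). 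So what your approach buys is self-containedness; what the paper's buys is a shorter argument by citation.

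The genuine weak point is exactly the step you flagged: deciding feasibility of an \emph{unordered} profile. With per-passenger service times this is delicate: the order of pick-ups and drop-offs at a common stop changes individual ride times (a passenger picked up first waits through the service times of all later pick-ups at that stop, since ride time runs from end of pick-up to beginning of drop-off), and when a subroute carries several copies of the same request type one must additionally fix which pick-up is matched to which drop-off. Hence ``the'' canonical realization is not well defined without an exchange argument (drop-offs before pick-ups; among equal operations, give the favorable positions to the passengers with least slack), which your sketch does not carry out. The paper sidesteps this entirely: it enumerates subroutes as \emph{ordered} waypoint sequences up to identifying copies of the same request (its notion of ``identical'' subroutes), of which there are at most $(h^2)^{2\cdot c\cdot h}=h^{4\cdot c\cdot h}$, and checks each sequence for feasibility in polynomial time via \cref{lem:route-to-tour}. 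You can patch your proof at no cost in the runtime by doing the same: enumerate ordered realizations, declare a profile feasible if at least one of its realizations is feasible, and then run your dynamic program over the feasible profiles.
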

\begin{proof}[sketch]
    We interpret finding the minimum number of feasible subroutes needed to serve all ascending (descending) requests as a \textsc{Multiset Multicover} problem and apply an algorithm proposed by Hua~et~al.~\cite{DBLP:journals/tcs/HuaWYL10} to solve it. Applying~\cref{lem:mt-subroute-combination}, we obtain the value of $\optturns$ for \textsc{MinTurn} in time~$O^*(n^{h^2}\cdot h^{4\cdot c\cdot h})$.
    \qed
\end{proof}

\section{Conclusion}
\label{sec:conclusion}
We introduce the \textsc{MinTurn} problem and characterize the boundary between polynomial solvability and \NP-hardness for the \textsc{MinTurn} and \textsc{liDARP} problem according to instance specifics, including time windows, shortcuts, service promise, and service times. We also show that the \textsc{MinTurn} problem is strongly \NP-hard to approximate with factor better than~$3$.
We then provide an \FPT-algorithm for the \textsc{MinTurn} and \textsc{liDARP} problem, parameterized by~$k$,~$c$,~$h$ and~$t$, which also works for other objectives and restrictions. 
Finally, we present an \XP-algorithm for the \textsc{MinTurn} problem without time windows parameterized by~$c$ and~$h$.

An interesting topic for further research would be to analyze the \textsc{liDARP} and \textsc{MinTurn} problem for other objectives and restrictions, such as minimizing the sum of turns (over all vehicles), or allowing an unlimited number of vehicles. It also remains open, whether the parameterized algorithms can be improved, such as if there is an \FPT-algorithm parameterized by~$c$ and~$h$.
For practical applications, such as the subline-based formulation~\cite{reiter_line-based_2024}, it may be interesting to develop heuristics and approximation algorithms for the \textsc{MinTurn} problem.

\newpage
%
% ---- Bibliography ----
%
% BibTeX users should specify bibliography style 'splncs04'.
% References will then be sorted and formatted in the correct style.
%
\bibliographystyle{splncs04}
\bibliography{turns-lidarp}

\begin{thebibliography}{10}
\providecommand{\url}[1]{\texttt{#1}}
\providecommand{\urlprefix}{URL }
\providecommand{\doi}[1]{https://doi.org/#1}

\bibitem{archetti_complexity_2011}
Archetti, C., Feillet, D., Gendreau, M., Grazia~Speranza, M.: Complexity of the
  {VRP} and {SDVRP}. Transportation Research Part C: Emerging Technologies
  \textbf{19}(5),  741--750 (2011). \doi{10.1016/j.trc.2009.12.006}

\bibitem{DBLP:journals/talg/BjeldeHDHLMSSS21}
Bjelde, A., Hackfeld, J., Disser, Y., Hansknecht, C., Lipmann, M., Mei{\ss}ner,
  J., Schl{\"{o}}ter, M., Schewior, K., Stougie, L.: Tight bounds for online
  {TSP} on the line. {ACM} Trans. Algorithms  \textbf{17}(1),  3:1--3:58
  (2021). \doi{10.1145/3422362}

\bibitem{garey_computers_1979}
Garey, M.R., Johnson, D.S.: Computers and Intractability: {A} Guide to the
  Theory of NP-Completeness. W. H. Freeman (1979). \doi{10.5555/574848}

\bibitem{DBLP:journals/eor/GaulKS22}
Gaul, D., Klamroth, K., Stiglmayr, M.: Event-based {MILP} models for
  ridepooling applications. Eur. J. Oper. Res.  \textbf{301}(3),  1048--1063
  (2022). \doi{10.1016/J.EJOR.2021.11.053}

\bibitem{GAUR200387}
Gaur, D.R., Gupta, A., Krishnamurti, R.: A 53-approximation algorithm for
  scheduling vehicles on a path with release and handling times. Information
  Processing Letters  \textbf{86}(2),  87--91 (2003).
  \doi{10.1016/S0020-0190(02)00474-X}

\bibitem{guan_routing_1998}
Guan, D.J.: Routing a vehicle of capacity greater than one. Discrete Applied
  Mathematics  \textbf{81}(1),  41--57 (Jan 1998).
  \doi{10.1016/S0166-218X(97)00074-7}

\bibitem{haugland_feasibility_2010}
Haugland, D., Ho, S.C.: Feasibility {Testing} for {Dial}-a-{Ride} {Problems}.
  In: Chen, B. (ed.) Algorithmic {Aspects} in {Information} and {Management}.
  pp. 170--179. Springer, Berlin, Heidelberg (2010).
  \doi{10.1007/978-3-642-14355-7_18}

\bibitem{HO2018395}
Ho, S.C., Szeto, W., Kuo, Y.H., Leung, J.M., Petering, M., Tou, T.W.: A survey
  of dial-a-ride problems: Literature review and recent developments.
  Transportation Research Part B: Methodological  \textbf{111},  395--421
  (2018). \doi{10.1016/j.trb.2018.02.001}

\bibitem{HOUGARDY20062529}
Hougardy, S.: Classes of perfect graphs. Discrete Mathematics
  \textbf{306}(19),  2529--2571 (2006). \doi{10.1016/j.disc.2006.05.021},
  creation and Recreation: A Tribute to the Memory of Claude Berge

\bibitem{DBLP:journals/tcs/HuaWYL10}
Hua, Q., Wang, Y., Yu, D., Lau, F.C.M.: Dynamic programming based algorithms
  for set multicover and multiset multicover problems. Theor. Comput. Sci.
  \textbf{411}(26-28),  2467--2474 (2010). \doi{10.1016/J.TCS.2010.02.016}

\bibitem{karuno_nagamochi_MVSP}
Karuno, Y., Nagamochi, H.: A 2-approximation algorithm for the multi-vehicle
  scheduling problem on a path with release and handling times. In: auf~der
  Heide, F.M. (ed.) Algorithms --- ESA 2001. pp. 218--229. Springer Berlin
  Heidelberg, Berlin, Heidelberg (2001). \doi{10.1016/S0166-218X(02)00596-6}

\bibitem{karuno_better_2002}
Karuno, Y., Nagamochi, H., Ibaraki, T.: Better approximation ratios for the
  single-vehicle scheduling problems on line-shaped networks. Networks
  \textbf{39}(4),  203--209 (2002). \doi{10.1002/net.10028}

\bibitem{DBLP:journals/informs/PaepeLSSS04}
de~Paepe, W., Lenstra, J.K., Sgall, J., Sitters, R.A., Stougie, L.:
  Computer-aided complexity classification of dial-a-ride problems. {INFORMS}
  J. Comput.  \textbf{16}(2),  120--132 (2004). \doi{10.1287/IJOC.1030.0052}

\bibitem{reiter_line-based_2024}
Reiter, K., Schmidt, M., Stiglmayr, M.: The {Line}-{Based} {Dial}-a-{Ride}
  {Problem}. In: Bouman, P.C., Kontogiannis, S.C. (eds.) 24th Symposium on
  Algorithmic Approaches for Transportation Modelling, Optimization, and
  Systems (ATMOS 2024). pp. 14:1 -- 14:20. Open Access Series in Informatics
  (OASIcs), Schloss Dagstuhl -- Leibniz-Zentrum f{\"u}r Informatik (2024).
  \doi{10.4230/OASIcs.ATMOS.2024.14}

\bibitem{tsitsiklis_special_1992}
Tsitsiklis, J.N.: Special cases of traveling salesman and repairman problems
  with time windows. Networks  \textbf{22}(3),  263--282 (1992).
  \doi{10.1002/net.3230220305}

\bibitem{vansteenwegen_survey_2022}
Vansteenwegen, P., Melis, L., Aktaş, D., Montenegro, B.D.G., Sartori~Vieira,
  F., Sörensen, K.: A survey on demand-responsive public bus systems.
  Transportation Research Part C: Emerging Technologies  \textbf{137},  103573
  (Apr 2022). \doi{10.1016/j.trc.2022.103573}

\bibitem{yu_liu_vsp}
Yu, W., Liu, Z.: Single-vehicle scheduling problems with release and service
  times on a line. Networks  \textbf{57}(2),  128--134 (2011).
  \doi{10.1002/net.20393}

\end{thebibliography}

\appendix

\section*{Appendix}

\subsection*{Completion of Proofs for the \emph{Polynomially Solvable Cases}}

\RouteToTour*
\label{lem:route-to-tour*}
\begin{proof}
    Testing if a route can be complemented to a feasible tour entails checking that no request is picked-up twice, the capacity is not violated, and that the time windows and service promise can be kept while respecting the time constraints imposed by the time distances, as well as the service and turn times. As shown by~\cite{haugland_feasibility_2010}, this can be done in polynomial time.

    We will now show that if there are no time windows, the check can be done in linear time. 
    Let~$r=\angles{w_1,\dots,w_m}$ be the given route consisting of~$m$ (untimed) waypoints. 
    For a timed waypoint~$w$, let~$t(w)$ be the timestamp and~$h(w)$ the stop.

    We now construct a tour~$r^\mathrm{T}=\angles{w_1^\mathrm{T},\dots,w_m^\mathrm{T}}$ corresponding to the route~$r$. We set the timestamps of the waypoints as follows: for the first waypoint~$w_1^\mathrm{T}$ we set~$t(w_1^\mathrm{T})=0$. For each subsequent waypoint~$w_j^\mathrm{T}$, we set~$t(w_j^\mathrm{T})=t(w_{j-1}^\mathrm{T})+\gamma$, where~$\gamma$ is the time constraint imposed by the time distance, as well as the service and turn time. 
    We have~$\gamma=t_{h(w_{j-1}),h(w_j)}+t_s+\delta\cdot t_\mathrm{turn}$, with~$\delta$ being the number of turns necessary between the two waypoints. If the direction of the vehicle is different at the two waypoints, one turn is needed. Otherwise, the vehicle has the same direction at both waypoints. Then, if the stop~$h(w_{j-1})$ comes after~$h(w_j)$ in the direction of the vehicle, two turns are needed, as the vehicle has to turn twice to drive back. Otherwise, no turn is needed.
    
    In the tour~$r^\mathrm{T}$, the drive time of each served request is minimized for the route, as the time between two waypoints is always set to the time constraint imposed by the time distance, as well as the service and turn time, which is consequently fulfilled by construction. Thus, if the tour violates the service promise, there is no feasible tour corresponding to the route. 
    Constructing the tour and checking whether the service promise is fulfilled can be done in linear time, each by a linear sweep along the route.    
    It remains to check the capacity constraints. We can accomplish this with another linear sweep along the route, maintaining the number of passengers currently transported.

    As we have seen, we can always construct a tour that respects the time constraints imposed by the time distances, service, and turn times. Thus, if there are no time windows and no service promise, the feasibility of a route depends only on the capacity constraints.
    \qed
\end{proof}

\RouteJoining*
\label{lem:route-joining*}
\begin{proof}
    Let~$r$ be the resulting route. Clearly,~$r$ serves each request at most once and respects the capacity constraint.
    It remains to show that~$r$ is feasible, i.e., there is a corresponding feasible tour. Let~$R^\mathrm{T}=\set{r_1^\mathrm{T},\dots,r_k^\mathrm{T}}$ be feasible tours corresponding to the routes~$R=\set{r_1,\dots,r_k}$. We now construct a tour~$r^\mathrm{T}$ that corresponds to the route~$r$.

    For a tour~$r^T_i=\angles{w_1^i,\dots,w_{m(i)}^i}$ and~$j\in[m(i)]$, let~$t(w_j^i)$ be the timestamp and~$h(w_j^i)$ the stop of the timed waypoint~$w_j^i$. 

    We construct~$r^\mathrm{T}$ as follows:  
    we first add the waypoints of~$r_1^\mathrm{T}$ to the tour. To differentiate, we write~$\bar{w}^i_j$ for a waypoint in~$r^\mathrm{T}$ corresponding to the waypoint~$w_j^i$. We set~$t(\bar{w}_1^1)=0$ and, for each subsequent waypoint~$\bar{w}_j^1$, we set~$t(\bar{w}_j^1)=t(\bar{w}_{j-1}^1)+t(w_j^1)-t(w_{j-1}^1)$, thus retaining the time difference to the preceding waypoint. For each subsequent tour~$r_i^\mathrm{T}$, we proceed analogously, retaining for each waypoint except the first the time difference to the preceding waypoint. For the first waypoint~$\bar{w}_1^i$, we set~$t(\bar{w}_1^i)=t(\bar{w}_{m(i-1)}^{i-1})+\gamma$, where~$\gamma$ is the time constraint imposed by the time distance, as well as the service and turn time. We have~$\gamma=t_{h(\bar{w}_{m(i-1)}^{i-1}),h(\bar{w}_1^i)}+t_s+\delta\cdot t_\mathrm{turn}$, with~$\delta$ being the number of turns necessary between the two waypoints. If the tours~$r_{i-1}$ and~$r_i$ have different directions one turn is needed. Otherwise, both tours have the same direction. If the stop~$h(\bar{w}_{m(i-1)}^{i-1})$ is after~$h(\bar{w}_1^i)$ in the direction of the tours, two turns are needed, as the vehicle has to drive back. Otherwise, no turn is needed.

    After applying this procedure to all tours, the time constraints~$\gamma$, imposed by the distances as well as the service and turn times, are fulfilled by construction. Further, the service promise is still fulfilled, as ride times did not change in comparison to the original tours, since we did not modify the time differences inside a tour. Thus the tour~$r^\mathrm{T}$, and therefore the joined route~$r$, is feasible.
    \qed
\end{proof}

\PolyLiDARP*
\label{thm:poly-lidarp*}
\begin{proof}
    Clearly, a route that serves a single request is feasible. Thus, serving each request in a different route is a feasible collection of routes. By~\cref{lem:route-joining}, joining these routes into a single route yields a feasible route that serves all requests. According to~\cref{lem:route-to-tour}, we can transform this route into an optimal tour and thus an optimal \textsc{liDARP} solution in linear time.
    \qed
\end{proof}

\MTPolyVariants*
\label{thm:mt-poly-variants*}
\begin{proof}
    In all cases, a subroute is feasible if and only if we do not violate the capacity:

    \begin{enumerate}
    \item As there are no time windows and no service promise, according to~\cref{lem:route-to-tour}, the feasibility of a subroute depends only on the capacity constraints.
    \item Since there are no service times as well as no shortcuts, picking-up and dropping-off a passenger does not cause a delay for the passengers already in the vehicle. Thus, each passenger always arrives after their direct time distance at their destination, regardless of other passengers transported by the vehicle. Thus, the service promise is always kept and this case is equivalent to there being no service promise.
    \item Since the capacity of the vehicles is~$1$, each passenger has to be transported individually, and is therefore driven directly to their destination. Thus, the service promise is always kept and this case is equivalent to there being no service promise.
    \end{enumerate}

    In order to obtain a minimum number of subroutes, we apply~\cref{lem:coloring} to obtain in polynomial time the minimum number~$a$ of feasible ascending and~$b$ of feasible descending subroutes needed to serve all requests. Using~\cref{lem:mt-subroute-combination}, we obtain that~$\optturns=\max\set{\ceils*{\frac{a+b}{k}},2\cdot\ceils{\frac{a}{k}}-1}$.
    \qed
\end{proof}

\newpage
\subsection*{Completion of Proofs for the \emph{Hardness Results}}

\SPSTHardness*
\label{thm:sp-st-nphard*}
\begin{proof}
    The reduction for~$k=1$ and~$c=2$ is in the main part of the paper. We now extend that reduction, first to higher capacities and then to higher numbers of vehicles.

    For higher capacities, we add more promise requests, such that there are in total~$(c-1)\cdot m$. We further generalize the destination of the promise requests to~$4+4Tn+(c-2)$. To compensate for the delay experienced by the promise requests due to having to serve other promise requests in the same subroute, we generalize the service promise to~$\alpha=1+b/a$ with~$b=2(1+T+n)+(c-2)$ and~$a=3+4Tn+(c-2)$.  
    Due to the filter requests, if~$m$ ascending subroutes are used, each ascending subroute must contain exactly~$c-1$ promise requests, as no two filter requests can be in the same subroute (due to~$\alpha<2$). 
    Thus, Observation~$3$, and consequently Observations~$4$ and~$5$, still apply as each of the~$m$ subroutes has only one seat not occupied by promise requests, just as in the case with capacity~$2$. The correctness is thus analogous to the case with capacity~$2$.
    For the inverse direction,~$(c-2)$ promise request are added to each constructed subroute. As the service promise was adapted, it is not violated if the promise requests are picked-up and dropped-off in the same order. Thus, the constructed route is feasible.

    We now generalize from arbitrary capacities to arbitrary number~$k$ of vehicles. 
    We modify the instance by adding~$(k-1)mc$ additional promise requests. As before,
    $\optturns = 2m-1$ for this \textsc{MinTurn} instance if and only if $S$ has a $3$-partition.
    Indeed, if~$S$ has a~$3$-partition, we can create one route, as in the case for~$k=1$, that serves all of the value, plug and filler requests, as well as~$c-1$ promise requests.
    It then remains to serve the~$(k-1)mc$ added requests. We know that each ascending subroute can serve~$c$ of these requests. Furthermore, each route with~$2m-1$ turns can contain~$m$ ascending subroutes. Thus, the remaining~$k-1$ routes can serve all the remaining promise requests. 
    Conversely, if $\optturns = 2m-1$ for this \textsc{MinTurn} instance, the corresponding collection~$R$ contains at most~$km$ ascending subroutes.
    Since all~$m$ filter requests have to be in separate subroutes, each of the remaining~$(k-1)m$ subroutes must contain~$c$ promise requests while the subroutes with a filter request contain~$c-1$ promise requests. As the $(k-1)m$ subroutes with~$c$ promise requests cannot contain any other requests, the remaining~$m$ subroutes can be rearranged into one route, see~\cref{lem:route-joining}, and thus this corresponds to a~$3$-partition of~$S$.

    Constructing these generalized instances still takes pseudo-polynomial time. As \textsc{3-Partition} is strongly \NP-hard, it follows that the~\textsc{MinTurn} problem is strongly \NP-hard.
    \qed
\end{proof}

\SPSCHardness*
\label{thm:sp-sc-nphard*}
\begin{proof}
    The construction is described in the main part of the paper. We now show that~$(S,m,T)$ is a \textsc{yes}-instance of \textsc{3-Partition} if and only if~$\optturns = 2m-1$ for the constructed \textsc{MinTurn} instance, before generalizing the reduction to higher values of~$k$ and~$c$.
   
    From~\cref{thm:poly-lidarp}, we know that an optimal solution to the constructed \textsc{liDARP} instance serves all requests. 
    As we see later (in Observation 1), each filter request has to be served by a different subroute, thus requiring at least~$m$ ascending subroutes to serve all requests. Thus, we need at least~$2m-1$ turns to serve all requests.

    Assume that we have an optimal solution to the \textsc{liDARP} that uses~$2m-1$ turns. 
    As we have only one vehicle and only ascending requests, we can thus assume that we have~$m$ ascending subroutes.
    
    \emph{Observation~$1$:} Each ascending subroute serves exactly one filter and one promise request. Indeed, as the service promise~$\alpha$ is less than~$2$ the maximum ride time of a filter request is less than~$2$, as the direct time distance is~$1$ due to the shortcuts. Thus, serving another filter request would require a detour of at least~$1$, thereby violating the service promise. Thus, each of the~$m$ ascending subroutes must also contain exactly one promise request as these subroutes serve all requests and filter and promise requests pairwise overlap.

    \emph{Observation~$2$:} The service promise allows a maximum delay of~$b=2T$ for each promise request, as, by using the provided shortcuts, the direct time distance between~$h_\textrm{s}$ and~$h_\textrm{e}$ is~$a=2+2n+2T$.

    \emph{Observation~$3$:} Serving a value request~$p_i$ requires a detour of~$2s_i$. This is because the vehicle has to drive along the sequence~$h_i^1,h_i^2,h_i^3,h_i^4$
    adding~$2s_i$ distance compared to using the shortcut between~$h_i^1$ and~$h_i^4$.

    \emph{Observation~$4$:} Combining observations~$2$ and~$3$, we conclude that the sum of the values corresponding to the value requests served by a subroute is~$T$. Let~$I_j$ be the set of indices~$i$ of value requests~$p_i$ served by subroute~$r_j$. Then~$\sum_{i\in I_j}s_i=T$. Setting~$S_j:=\set{s_i\mid i\in I_j}$ yields a valid~$3$-partition of~$S$.

    Conversely, if there exists a $3$-partition~$S_1,\dots,S_m$ of~$S$, we create~$m$ ascending subroutes~$r_1,\dots,r_m$ and assign for each~$j\in[m]$ the value requests~$p_i$ corresponding to~$s_i\in S_j$ to the subroute~$r_j$. We further assign one filter and promise request to each ascending subroute. Then, the delay in the ride time of a promise request is~$2T$ due to serving the value requests, which is the delay permitted by the service promise. Thus, each subroute is feasible. Joining these subroutes into a single route, we obtain a feasible route, see~\cref{lem:route-joining}, that serves all requests and has~$2m-1$ turns.

    For higher capacities, we use the same idea as in the reduction from~\cref{thm:sp-st-nphard}. We add more promise requests such that we have~$m(c-1)$ in total. If~$m$ ascending subroutes are used, each ascending subroute must then contain~$(c-1)$ promise requests. As this leaves a free capacity of~$1$ as well as a maximum delay of~$2T$ for the remaining drive, the correctness is analogous to the case of capacity~$2$.

    For a higher number of vehicles, the generalization is analogous to the generalization in the proof of~\cref{thm:sp-st-nphard}, adding~$(k-1)mc$ more promise requests.

    Differing from the reduction in~\cref{thm:sp-st-nphard}, we can construct this instance in polynomial time, as our number of stops does not depend on~$T$. It follows, that the \textsc{MinTurn} problem is strongly \NP-hard without time windows, without service times, as well as without turn times.
    \qed
\end{proof}

\InApproximable*
\label{cor:mt-approx-nphard*}
\begin{proof}
    We show this for the first case by reducing \textsc{3-Partition} to the \textsc{MinTurn} problem. Let~$(S,m,T)$ be an instance of \textsc{3-Partition}. We construct a \textsc{MinTurn} instance as in the proof of~\cref{thm:sp-st-nphard} for~$k=1$. We then change the number of vehicles in this instance to~$m$. As we have shown earlier,~$S$ has a~$3$-partition if and only if there is an optimal \textsc{liDARP} solution that uses~$m$ ascending subroutes. Since we now have~$m$ vehicles, this corresponds to each vehicle having exactly one ascending subroute, and thus one turn. If~$S$ does not have a~$3$-partition, more than~$m$ ascending subroutes must be used. By the pigeonhole principle, this means that there is at least one vehicle that has two ascending subroutes, and thus at least~$3$ turns. Therefore, if~$S$ has a~$3$-partition, then~$\optturns=1$ for the constructed \textsc{MinTurn} instance, otherwise~$\optturns\geq 3$.
    As this instance can be constructed in pseudo-polynomial time, it follows that it is strongly \NP-hard to approximate the \textsc{MinTurn} problem with a factor better than~$3$.

    The proof for the second case is analogous, using the construction from~\cref{thm:sp-sc-nphard} instead. 
    \qed
\end{proof}

\MTTWHardness*
\label{thm:mt-tw-nphard*}
\begin{proof}
    We  first show the reduction for~$k=1$ and~$c=1$ before we extend it to higher values of~$k$ and~$c$. Given an instance~$(S,m,T)$ of \textsc{3-Partition} with~${S=\set{s_1,\dots,s_n}}$, we construct the following instance of \textsc{liDARP}:  
    we have $s^{\max}:=\max_{s_i\in S} s_i+1$ stops~$H=\angles{0,\dots,s^{\max}-1}$. 
    The distance between subsequent stops is~$1$ and we do not have shortcuts.
    We have two types of requests: for each value~$s_i\in S$ we create a \emph{value request}~$p_i=([0,s_i],[0,2mT+2m-1])\in P_{\textrm{V}}$. 
    We also add~$m$ \emph{separator requests}~$P_\textrm{sep}=\set{p_\textrm{sep}^j\mid j\in[m]}$ with the separator requests being~$p_\textrm{sep}^j=([0,1],[2jT+2(j-1),2jT+2j-1])$.
    The turn time and service time are set to $0$, and we do not have a service promise.

    \begin{figure}
        \centering
        \includegraphics{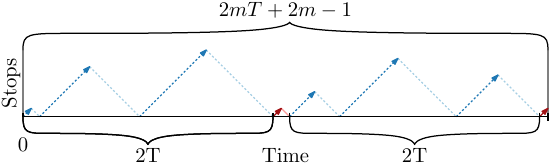}
        \caption{A tour serving all requests of an instance constructed in the reduction from \textsc{3-Partition} to the \textsc{MinTurn} problem with~$m=2$. The value requests in~$P_{\textrm{V}}$, represented by (blue) dashed arrows, are served in the intervals between the separator requests, represented by (red) solid arrows. The subtours used to return from serving a request are represented by lighter lines. }
        \label{fig:mt-tw-nphard}
    \end{figure}

    Since all requests originate at stop~$0$ and the capacity is~$1$, we need exactly two subtours per served request, except for the last one, as the vehicle does not need to return to~$0$. 
    Therefore, the number of turns required to serve all requests with a single tour is~$2m+2n-1$.
    Thus, it is~$\optturns=2m+2n-1$ if and only if there is a feasible tour that serves all requests. We will now show that such a tour exists if and only if there is a~$3$-partition of~$S$.

    Assume that there is a feasible tour that serves all requests.
    For the separator requests in~$P_\textrm{sep}$, the time between earliest pick-up and latest drop-off corresponds to the direct time distance of these requests. This means that if a separator request~$p_j\in P_\textrm{sep}$ is served, it has to be picked-up precisely at time ${2jT+2(j-1)}$ and dropped-off at time~$2jT+2j-1$. 
    If all of them are served, the value requests are relegated to be served between the fixed times where the separator requests are served. Thus, the value requests have to be served in the time intervals $[2T(j-1)+2(j-1),2jT+2(j-1)]$ for~$j\in[m]$. As each of these~$m$ intervals has size~$2T$ and the time needed to serve all of the value requests and return to station $0$ is $2mT$, if all value requests are served, the vehicles need to be busy all of the time. Thus, the intervals must be completely filled by the value requests. For the set of indices~$I_j$ of value requests served in the $j$-th interval, it thus holds that~$\sum_{i\in I_j}s_i=T$. Setting~$S_j=\set{s_i\mid i\in I_j}$ for~$j\in[m]$ we obtain a valid $3$-partition of~$S$.

    Conversely, if there is a $3$-partition~$S_1,\dots,S_m$ of~$S$, we group the value requests into sets~$P_j=\set{p_i\mid s_i\in S_j}$ for~$j\in[m]$. 
    We then construct a~tour as follows: all separator requests are served at their precise times. For~$j\in[m]$, the value requests in~$P_j$ are served in the~$j$-th time interval. As the direct time distances of the requests in~$P_j$ sum to~$T$, the resulting tour respects the time constraints and is feasible.

    To show that the result holds also for higher capacities $c>1$, we modify the proof as follows: in our reduction, we
    duplicate all requests, that is, we have~$c$ copies~$p_i^\ell$ for $\ell\in [c]$ of each value request created for an element~$s_i\in S$ and~$c$ copies of of each separator request.
    If the vehicle is always full when leaving stop~$0$,~$2m+2n-1$ turns are needed to serve all requests. 
    If all requests are served, the separator requests must be still served at their precise times, with the value requests being delegated to the~$m$ time intervals.
    We now show that for value requests, if all requests can be served, there is a tour such that all~$c$ requests created for an element~$s_i$ are served together. We prove this by incrementally creating such a tour. Assume we have a tour that serves all requests, but not all copies of a value request~$p_i$ are served together.
    Let~$s_i$ be the largest number for which not all copies of the corresponding value request are transported together, and let~$p_i^\ell$ be the request with lowest index that is not transported together with~$p_i^1$. 
    The request~$p_i^1$ must be transported together with another, shorter value request~$p'$. 
    The lengths of the two subtours (and its return subtours) that~$p_i^\ell$ and~$p'$ are assigned to, would not increase if we swapped the two requests~$p_i^\ell$ and~$p'$ (the time windows also allow the swapping, as they are the same for all value requests).
    By applying this process iteratively, we obtain a feasible tour where each value request is transported together with its copies.
    We conclude that if there is a \textsc{liDARP} solution that serves all requests, there is a \textsc{liDARP} solution that has~$2m+2n-1$ turns, and transports all value requests created for an element~$s_i$ together. It is easy to see that also all copies of a the separator requests are transported together, as they have precise times.
    Thus, as in the case of~${c=1}$, we can partition~$S$ according to the intervals in which the corresponding value request are served.
    Conversely, if~$S$ has a $3$-partition, we can proceed analogously to the case of~$c=1$ to construct a feasible tour serving all request with~$2m+2n-1$ turns, merely taking care that all duplicates of a request are served together.
    
    For higher number~$k>1$ of vehicles, we create separate \emph{service areas} for the vehicles within which we repeat the construction for~$k=1$. 
    That is, we consider~$k\cdot s^{\max}$ stops, where after a sequence of~$s^{\max}$ stops with distance~$1$ we have a longer distance~$2mT+2m$, that would not allow to serve a request after crossing it as all time windows have ended.
    We duplicate the requests into~$k$ collections of requests~$P'={\bigcup_{\ell\in [k]} P_\ell}$, with the requests in~$P_\ell$ being translated by~$(\ell-1)\cdot s^{\max}$ stops, i.e., all requests in the set~$P_\ell$ originate at stop~$(\ell-1)\cdot s^{\max}$.
    As all requests have to be served in the time interval~$[0,2mT+2m-1]$, each vehicle can only serve requests from one collection~$P_\ell$. 

    As the instance(s) can be constructed in polynomial time, and \textsc{3-Partition} is strongly \NP-hard, it follows that both \textsc{liDARP} and \textsc{MinTurn} problem are strongly \NP-hard. 
    \qed
\end{proof}

\subsection*{Completion of Proofs for the \emph{Parametrized Algorithms}}

In order to prove the existence of an \FPT-algorithm as claimed by Theorem~\ref{thm:minturn-fpt}, we need to prove a number of auxiliary results, which we do in the following. 

\begin{lemma}
    \label{lem:bound-served-pax}
    A single vehicle with capacity~$c$ can serve at most~$t\cdot c$ requests in time~$t$.
\end{lemma}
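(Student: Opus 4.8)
The plan is to bound the number of served requests by a seat-occupancy count over the $t$ integer time steps $0,1,\dots,t-1$, combined with a pigeonhole argument. First I would observe that, since every request has distinct origin and destination and the time distance between distinct stops is a positive integer, any served request $p$ has ride time at least $t_{o_p,d_p}\ge 1$; that is, its passenger occupies a seat for at least one full time unit. Because a passenger may not leave the vehicle before reaching their destination, this occupancy is a single contiguous interval $[a_p,b_p]$ (from the end of pick-up to the start of drop-off) with integer endpoints and $b_p-a_p\ge 1$.

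Next I would confine all activity to the horizon $[0,t-1]$. Time starts at $0$, so $a_p\ge 0$, and drop-off starts no later than the latest drop-off time, so $b_p\le l_p\le \max_{q\in P}l_q = t-1$. Hence $a_p$ is an integer with $0\le a_p\le b_p-1\le t-2$, so the half-open interval $[a_p,b_p)$ contains at least one of the integer times $0,1,\dots,t-1$, during which $p$ is on board.

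Then I would carry out the counting. For each integer time $\tau\in\{0,\dots,t-1\}$, let $L(\tau)$ be the number of passengers on board at time $\tau$; by the capacity constraint, $L(\tau)\le c$. Since each served request is on board at \emph{some} such integer time, summing $L(\tau)$ over $\tau$ counts every served request at least once, yielding
\[
  \#\{\text{served requests}\} \;\le\; \sum_{\tau=0}^{t-1} L(\tau) \;\le\; t\cdot c .
\]

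The argument is largely routine; the only point requiring care is the claim that each served request occupies a full integer time step lying within the horizon. This rests on the two facts established above—ride time at least $1$, from the distinctness of origin and destination together with the integrality and positivity of the time distances, and all pick-ups and drop-offs lying in $[0,t-1]$, from the integrality of time and the definition $t=\max_{p}l_p+1$—together with the rule that a passenger may not leave the vehicle before their destination, which makes the on-board interval contiguous. With these in hand, the capacity bound $L(\tau)\le c$ and the pigeonhole step are immediate.
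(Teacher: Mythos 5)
Your proof is correct and follows essentially the same argument as the paper's: each served request must occupy a seat for at least one full time unit (since origins and destinations are distinct and time distances are positive integers), so the capacity bound $c$ at each of the $t$ integer time steps yields at most $t\cdot c$ served requests. Your version merely formalizes the paper's brief per-unit-interval counting into an explicit occupancy sum with the pigeonhole step spelled out.
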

\begin{proof}
    As the origin and destination of a request need to be distinct, and we assume integer time distances, each request has to be transported a time distance of at least~$1$. Therefore, even if the service time is~$0$, the number of requests that can be served by a vehicle in a time interval of~$1$ is limited by its capacity~$c$. Therefore, in~$t$ time, at most~$t\cdot c$ requests can be served by a single vehicle.
    \qed
\end{proof}

\begin{lemma}
    \label{lem:event-based-xp}
    The \textsc{MinTurn} and \textsc{liDARP} problem can be solved in~$O^*(n^{2\cdot t\cdot c\cdot k})$ time.
\end{lemma}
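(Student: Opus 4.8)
The plan is to solve both problems by brute-force enumeration of candidate routes, exploiting that each vehicle can only serve a bounded number of requests. First I would invoke \cref{lem:bound-served-pax}: since every served request must be completed by the maximum time~$t$, any route used by a single vehicle serves at most~$t\cdot c$ requests, and therefore consists of at most~$2\cdot t\cdot c$ waypoints (one pick-up and one drop-off per served request). Consequently, every candidate route is a sequence of length at most~$2\cdot t\cdot c$ over the~$2n$ possible waypoint types (a pick-up or a drop-off of one of the~$n$ requests), so there are at most~$O((2n)^{2\cdot t\cdot c})=O^*(n^{2\cdot t\cdot c})$ such sequences. I would generate these sequences using the \emph{event-based graph}~\cite{DBLP:journals/eor/GaulKS22,reiter_line-based_2024}, whose layered structure encodes the admissible pick-up/drop-off transitions and thereby restricts attention to sequences consistent with the directionality property and the capacity bookkeeping.

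Second, for each enumerated sequence I would test in polynomial time whether it is a feasible route via \cref{lem:route-to-tour}, discarding the infeasible ones; this adds only a polynomial factor, which is absorbed by the~$O^*$ notation. The result is the set of all routes that can appear in a solution. Third, I would combine these feasible single-vehicle routes into collections of up to~$k$ routes (allowing empty routes to cover the ``up to~$k$'' case). Since in the open setting the~$k$ vehicles choose their routes independently, there are, up to polynomial factors, at most~$(n^{2\cdot t\cdot c})^{k}=n^{2\cdot t\cdot c\cdot k}$ such collections, i.e.~$O^*(n^{2\cdot t\cdot c\cdot k})$ many. For each collection~$R$ I would check in polynomial time that it is feasible, i.e.\ that no request is served more than once, record the number of served requests, and record~$\max_{r\in R}\abs{r}$.

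To extract the answers I would make a single pass over the enumerated collections: for the \textsc{liDARP} I return a feasible collection maximizing the number of served requests, and for \textsc{MinTurn} I restrict to the feasible collections attaining this maximum (the optimal \textsc{liDARP} solutions) and return the minimum value of~$\max_{r\in R}\abs{r}$ among them, which is exactly~$\optturns$. This pass contributes only polynomial overhead, so the dominating term remains~$n^{2\cdot t\cdot c\cdot k}$, giving the claimed runtime.

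The main obstacle is to justify rigorously that enumerating sequences of length at most~$2\cdot t\cdot c$ over the waypoint types indeed captures every route occurring in an optimal solution; this is precisely \cref{lem:bound-served-pax} together with the fact that all time windows end by~$t$, so no optimal solution needs a vehicle serving more than~$t\cdot c$ requests. The remainder is bookkeeping: ensuring that the per-route feasibility check and the per-collection single-service check both stay polynomial, so that they are safely suppressed by the~$O^*$ notation.
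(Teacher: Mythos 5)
Your proposal is correct and follows essentially the same argument as the paper's proof: both use \cref{lem:bound-served-pax} to bound each route by $t\cdot c$ requests (hence $2\cdot t\cdot c$ waypoints), enumerate the $O^*(n^{2\cdot t\cdot c})$ feasible routes via the event-based graph with feasibility checked by \cref{lem:route-to-tour}, and then scan all $O^*(n^{2\cdot t\cdot c\cdot k})$ collections of up to~$k$ routes, checking pairwise disjointness of served requests and extracting the \textsc{liDARP} optimum and~$\optturns$ with polynomial overhead. There is no gap; the approach and the runtime accounting coincide with the paper's.
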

\begin{proof}
    The algorithm is based on the \emph{event-based graph}, a concept introduced by Gaul~et~al.~\cite{DBLP:journals/eor/GaulKS22} and adapted to the \textsc{liDARP} in~\cite{reiter_line-based_2024}. In this graph, each vertex corresponds to an event, a pick-up or drop-off, together with the \emph{occupancy} of the vehicle (the passengers in the vehicle) at that point in time. Thus the graph contains~$O(n^c)$ vertices. Note that each vertex corresponds to a waypoint. If the waypoints corresponding to two vertices can appear consecutively in a feasible route, there is a directed edge between them. There is further a \emph{start vertex} connected to all pick-up events where there are no other passengers in the vehicle yet as well as an \emph{end vertex} connected to all drop-off events where the vehicle is empty afterwards.

    A route thus corresponds to a simple path in this graph from the start to the end vertex. As we know from~\cref{lem:bound-served-pax} that each vehicle can serve at most~${t\cdot c}$ requests, it follows that paths corresponding to feasible routes have lengths at most~$2\cdot t\cdot c$. Thus, there are~$O^*(n^{2\cdot t\cdot c})$ feasible routes.

    After enumerating all feasible routes, which we can do in~$O^*(n^{2\cdot t\cdot c})$ as testing the feasibility of a route is polynomial, see~\cref{lem:route-to-tour}, we check for each collection of up to~$k$~routes whether it is feasible, i.e., whether no two routes serve the same request. We then determine the optimal collections regarding the number of served requests. For the \textsc{liDARP}, we simply return one of these collections. For \textsc{MinTurn}, we determine~$\optturns$ by iterating over all collections. 
    As checking feasibility and calculating~$\max_{r\in R}\abs{r}$ for a collection~$R$ is possibly in polynomial time, the runtime is dominated by the number of collections, which is~$O^*(n^{2\cdot t\cdot c\cdot k})$.
    \qed
\end{proof}

To obtain an \FPT-runtime, we slightly modify the algorithm such that it reduces inputs to a predetermined size. For this we need the following lemma.

\begin{lemma}
    \label{lem:bound-requests}
    There are at most~$h^2\cdot t^2$ distinct requests.
\end{lemma}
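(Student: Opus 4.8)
The plan is to prove this by a direct counting argument over the four parameters that specify a request. Recall that a request is written as~$p = ([o_p,d_p],[e_p,l_p])$, so it is fully determined by its origin~$o_p\in H$, its destination~$d_p\in H$, its earliest pick-up time~$e_p$, and its latest drop-off time~$l_p$. Two requests agreeing in all four coordinates are indistinguishable, so it suffices to bound the number of admissible tuples~$(o_p,d_p,e_p,l_p)$, and the factors~$h^2$ and~$t^2$ should arise from the spatial and temporal coordinates respectively.

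First I would count the spatial part. Since the line consists of exactly~$h$ stops, there are at most~$h$ choices for~$o_p$ and at most~$h$ choices for~$d_p$, contributing a factor of~$h^2$. The requirement~$o_p\neq d_p$ only decreases this count, so~$h^2$ remains a valid upper bound.

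Next I would count the temporal part, invoking the definition~$t:=\max_{p\in P}l_p+1$. By this definition every latest drop-off time satisfies~$l_p\leq t-1$, and since time starts at~$0$ and is integer,~$l_p$ takes one of the~$t$ values in~$\set{0,1,\dots,t-1}$. For the earliest pick-up time, any servable request has~$e_p\leq l_p\leq t-1$ (a request with~$e_p>l_p$ can never be served and may be discarded in preprocessing), so~$e_p$ likewise ranges over at most~$t$ integer values. This yields a factor of~$t^2$.

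Multiplying the independent spatial and temporal counts gives at most~$h^2\cdot t^2$ distinct requests, as claimed. I do not expect any genuine obstacle here; the proof is elementary. The only point requiring a moment's care is matching the integer time range to the definition of~$t$ so that both~$e_p$ and~$l_p$ are confined to exactly~$t$ admissible values (namely~$0$ through~$t-1$) rather than~$t+1$, and noting that confining~$e_p$ relies on restricting attention to servable requests.
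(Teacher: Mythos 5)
Your proof is correct and follows essentially the same counting argument as the paper: each request is determined by origin, destination, earliest pick-up, and latest drop-off, with at most~$h$ choices for each stop and at most~$t$ choices for each time. Your extra care in pinning $e_p$ and $l_p$ to the integer range~$0$ through~$t-1$ is a slight refinement of the paper's ``up to $t$ possibilities each,'' but the approach is identical.
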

\begin{proof}
    Each request is characterized by its origin, destination, earliest pick-up, and latest drop-off. For the origin and destination there are~$h$ options each, while for the earliest pick-up and latest drop-off there are up to~$t$ possibilities each. Thus, there are at most~$h^2\cdot t^2$ distinct requests.
    \qed
\end{proof}

\MinTurnFPT*
\label{thm:minturn-fpt*}
\begin{proof}
Since at most~$t\cdot c\cdot k$ requests can be served by~$k$ vehicles, see~\cref{lem:bound-served-pax}, a distinct request occurring more than~$t\cdot c\cdot k$ times in the input can be reduced to~$t\cdot c\cdot k$ occurrences without changing the solution of the problem. Thus, the reduced input has length at most~$h^2\cdot t^2\cdot (t \cdot c\cdot k)$. Using the algorithm from~\cref{lem:event-based-xp} results in a runtime of~$O^*((h^2\cdot t^3\cdot c\cdot k)^{2\cdot t\cdot c\cdot k})$.
\end{proof}

In order to prove \cref{thm:minturn-xp-no-tw}, we use the following lemma.
\begin{lemma}
    \label{lem:subroute-bound}
    A feasible subroute can serve at most~$c\cdot h$ requests and thus contains at most~$2\cdot c\cdot h$ waypoints. 
\end{lemma}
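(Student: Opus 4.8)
The plan is to bound the number of requests a single feasible subroute can serve by arguing that each stop can be a pick-up location for at most~$c$ requests served by the subroute. Recall that a subroute is monotone in direction (say ascending, the descending case being symmetric), so the vehicle visits each of the~$h$ stops at most once. The key observation is the directionality property combined with the capacity constraint: if the subroute picks up more than~$c$ requests at the same stop, then immediately after that stop the vehicle would carry more than~$c$ passengers simultaneously, since none of them can have been dropped off yet (their origin is this stop and in an ascending subroute their destinations lie strictly later). This would violate the capacity, so at most~$c$ requests can originate at any single stop within the subroute.

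First I would fix a feasible ascending subroute and, for each stop~$i\in H$, let~$P_i$ denote the set of served requests with origin~$i$. I would then show~$\abs{P_i}\le c$: consider the moment just after the vehicle leaves stop~$i$; every request in~$P_i$ has been picked up but, because in an ascending subroute every destination comes strictly after every origin and the vehicle moves monotonically forward, none of them can yet have been dropped off. Hence all~$\abs{P_i}$ passengers are simultaneously on board, forcing~$\abs{P_i}\le c$ by the capacity constraint. Since there are~$h$ stops and every served request has exactly one origin among them, the total number of served requests is~$\sum_{i\in H}\abs{P_i}\le c\cdot h$.

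The waypoint count then follows immediately: each served request contributes exactly two waypoints to the subroute, namely its pick-up and its drop-off. Therefore a subroute serving at most~$c\cdot h$ requests contains at most~$2\cdot c\cdot h$ waypoints.

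I do not expect a genuine obstacle here; the only point requiring a little care is making precise why, in an ascending subroute, no request picked up at stop~$i$ can be dropped off at or before stop~$i$. This is exactly the directionality property: a passenger is always transported toward their destination, so in an ascending subroute a drop-off stop strictly follows the corresponding pick-up stop. I would state this explicitly to justify that all requests in~$P_i$ coexist on board just past stop~$i$, which is the crux of the capacity argument.
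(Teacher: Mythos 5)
Your proof is correct and takes essentially the same approach as the paper: an elementary capacity-based pigeonhole argument over the line, followed by the observation that each served request contributes exactly two waypoints. The only difference is bookkeeping --- you charge each request to its origin stop (at most $c$ served requests per stop, since they are all simultaneously on board when the vehicle leaves that stop), whereas the paper charges each request to a segment between neighboring stops that it traverses (at most $c$ requests per segment), which yields the same bound of $c\cdot h$ requests.
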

\begin{proof}
    The number of requests that can be transported together between neighboring stops is limited by the capacity of the vehicle. As each passenger has to be transported at least one stop along the line, a subroute can serve at most~$c\cdot h$ requests. Thus, a subroute can contain at most~$2\cdot c\cdot h$ waypoints, as each served request contributes two waypoints, the pick-up and drop-off. 
    \qed
\end{proof}

\MinTurnXPNoTW*
\label{thm:minturn-xp-no-tw*}
\begin{proof}
    We say that two subroutes~$r=\angles{w_1,\dots,w_m}$ and~$r'=\angles{w_1',\dots,w_m'}$ are \emph{identical} if for each~$j\in [m]$ the requests picked-up or dropped-off in waypoints~$w_j$ and~$w_j'$ are copies of each other.

    As without time windows there are at most~$h^2$ distinct requests, compare \cref{lem:bound-requests}, and each subroute consists of at most~$2\cdot c\cdot h$ requests that can be served by a subroute, see~\cref{lem:subroute-bound}, and a subroute is defined by its waypoints, there are~$O^*(h^{4\cdot c\cdot h})$ distinct subroutes. As the feasibility of a subroute can be checked in polynomial time, see~\cref{lem:route-to-tour}, all distinct feasible subroutes can be obtained in time~$O^*(h^{4\cdot c\cdot h})$.
    
    According to~\cref{lem:mt-subroute-combination}, it suffices to determine the minimum number of feasible ascending (descending) subroutes that serve all ascending (descending) requests, as~$\optturns$ can be calculated directly from this.

    Determining the minimum number of feasible subroutes needed to serve all ascending (descending) requests is the \textsc{Multiset Multicover} problem, where the up to~$h^2$ distinct requests have to be served up to~$n$ times (due to duplicates) and the up to~$h^{4\cdot c\cdot h}$ multisets that cover the requests (the subroutes) may each serve a distinct request multiple times (as they can serve multiple copies of a request). As shown by Hua~et~al.~\cite{DBLP:journals/tcs/HuaWYL10}, the optimal objective value of this problem can be found in time~$O^*(n^{h^2}\cdot h^{4\cdot c\cdot h})$. 
    \qed
\end{proof}

\end{document}